\title{Satisfaction, Restriction and Amalgamation of Constraints in the Framework of $\mathcal{M}$-Adhesive Categories}
\author{Hanna Sch\"olzel$^{1}$, Hartmut Ehrig$^{1}$, Maria Maximova$^{1}$, Karsten Gabriel$^{1}$, \\
and Frank Hermann$^{ 1,2 }$
\institute{
1) Institut f{\"{u}}r Softwaretechnik und Theoretische Informatik,
 Technische Universit\"at Berlin, Germany \\
2) University of Luxembourg, Interdisciplinary Centre for Security, Reliability and Trust}
\email{[hannas, ehrig, mascham, kgabriel, frank]@cs.tu-berlin.de}
}
\newcommand{\newaliastheorem}[4]{%
	\newaliascnt{#1}{#2}%
	\newtheorem{#1}[#1]{#3}%
	\aliascntresetthe{#1}%
	\expandafter\newcommand\csname #1autorefname\endcsname{#4}%
}
\newtheorem{theorem}{Theorem}[section]
\numberwithin{theorem}{section}
\newtheorem*{genAssumption}{General Assumption}
\DeclareMathAlphabet{\mathcal}{OMS}{cmsy}{m}{n}
\newcounter{linesTODO}
\newlength{\XInnerSep}
\newcommand{\TG}{\ensuremath{\mathit{TG}}}
\newcommand{\M}{\ensuremath{\mathcal{M}}\xspace}
\newcommand{\cat}[1]{\ensuremath{\mathbf{#1}}\xspace}
\newcommand{\rarr}{\ensuremath{\rightarrow}}
\newcommand{\iDash}{\ensuremath{\stackrel{I}{\vDash}}}
\newcounter{comCounter}[page]
\newcommand{\linefill}{	%
	\cleaders
	\hbox{$\smash{\mkern-2mu\mathord-\mkern-2mu}$}	%
	\hfill
	\vphantom{\lower1pt\hbox{$\rightarrow$}}	%
	}
\newcommand{\xmake}[1]{\mathrel{\lower1pt\hbox{$#1$}}} 
 \newcommand\wrapfill{\par
  \ifx\parshape\WF@fudgeparshape
  \nobreak
  \vskip-\baselineskip
  \vskip\c@WF@wrappedlines\baselineskip
  \allowbreak
  \WFclear
  \fi
}
\newcommand{\flabel}[1]{\label{f:#1}}
\newcommand{\fig}[3]{ 
    \begin{figure*}[htb]
    \centering
    \includegraphics*[scale=#2]{figures/#1}
    \caption{#3} \flabel{#1}
    \end{figure*}
}
\newcounter{lines}
\let\qedorig\qed
\renewcommand{\qed}{\hspace*{\fill} \qedorig}
\let\existsorig\exists
\renewcommand{\exists}{\ \existsorig\ }
\let\forallorig\forall
\renewcommand{\forall}{\ \forallorig\ }
\let\nexistsorig\nexists
\renewcommand{\nexists}{\ \nexistsorig\ }
\newcounter{current}
\newcommand{\xcmatrix}[2]{\begin{center} \parbox[c]{0.5cm}{\xymatrix#1{#2}} \end{center}}
\newcommand{\ACleft}[2][-4.5ex,0ex]{\save[]+<#1>*{#2~\triangleright}\restore}
\newcommand{\ACright}[2][4.5ex,0ex]{\save[]+<#1>*{\triangleleft~#2}\restore}
\newcommand{\ACdown}[2][0ex,-3.6ex]{\save[]+<#1>*\txt{$\vartriangle$\\ $#2$}\restore}
\newcommand{\ACup}[2][0ex,3.6ex]{\save[]+<#1>*\txt{$#2$\\ $\triangledown$}\restore}
\begin{document}
\maketitle

\begin{abstract}
Application conditions for rules and constraints for graphs are well-known in the theory of graph transformation and have been extended already to \M-adhesive transformation systems. According to the literature we distinguish between two kinds of satisfaction for constraints, called general and initial satisfaction of constraints, where initial satisfaction is defined for constraints over an initial object of the base category. Unfortunately, the standard definition of general satisfaction is not compatible with negation in contrast to initial satisfaction.

Based on the well-known restriction of objects along type morphisms, we study in this paper restriction and amalgamation of application conditions and constraints together with their solutions. In our main result, we show compatibility of initial satisfaction for positive constraints with restriction and amalgamation, while general satisfaction fails in general.

Our main result is based on the compatibility of composition via pushouts with restriction, which is ensured by the horizontal van Kampen property in addition to the vertical one that is generally satisfied in \M-adhesive categories.
\end{abstract}


\section{Introduction}
\label{sec:intro}
The framework of $\M$-adhesive categories has been introduced recently~\cite{EGH10,BEGG10} as a generalization 
of different kinds of high level replacement systems based on the double pushout (DPO) approach~\cite{EEPT06}.
Prominent examples that fit into the framework of $\M$-adhesive categories
are (typed attributed) graphs~\cite{EEPT06,LBE+07} and (high-level) Petri nets~\cite{BEE+09,EHP+08}.
In the context of domain specific languages and model transformations based on graph transformation, graph conditions (constraints) are already used extensively for the specification of model constraints and the specification of application conditions of 
transformation rules. 
Graph conditions can be nested, may contain Boolean expressions~\cite{HP05,HP09} and are expressively equivalent to first-order formulas on graphs~\cite{Courcelle97theexpression} as shown in~\cite{HP09,Rensink04}.
We generally use the term ``nested condition'' whenever we refer to the most general case.

Restriction is a general concept for the definition of views of domain languages and
is used for reducing 
the complexity of a model and for increasing the focus to relevant model element types.
A major research challenge in this field is to provide general results that allow for reasoning on
properties of the full model (system) by analyzing restricted properties on the views (restrictions) of the model only.
Technically, a restriction of a model is given as a pullback along type morphisms.
While this construction can be extended directly to restrictions of nested conditions, the satisfaction 
of the restricted nested conditions is not generally guaranteed for the restricted models, but---as we show in this paper---can be ensured under some sufficient conditions.

According to the literature~\cite{HP09,EEPT06}, we distinguish between two kinds of satisfaction for nested conditions, called general and initial satisfaction, where initial satisfaction is defined for nested conditions over an initial object of the base category. 
Intuitively, general satisfaction requires that a property holds for all occurrences of a premise pattern, while initial satisfaction requires this property for at least one occurrence.
Unfortunately, the standard definition of general satisfaction is not compatible with the Boolean operators for negation and disjunction,
but initial satisfaction is compatible with all Boolean operators (see App.~A in~\cite{SEM+12}).
In order to show, in addition, compatibility of initial satisfaction with restriction, we introduce the concept of amalgamation
for typed objects, where objects can be amalgamated along their overlapping according to the given type restrictions.

As the main technical result, we show that solutions for nested conditions can be composed and decomposed along an amalgamation
of them (\autoref{thm:ACviaRestr}), if the nested conditions are positive, i.e., they contain neither a negation nor a ``for all'' expression (universal quantification). 
Based on this property, we show in our main result (\autoref{thm:compatibility-initial-satisfaction-amalgamation}),
that initial satisfaction of positive nested conditions is compatible with amalgamation based on restrictions that agree on their overlappings. Note in particular that this result does not hold for general satisfaction which we illustrate by a concrete counterexample.

The structure of the paper is as follows.
Section~\ref{sec:general} reviews the general framework of $\M$-adhesive categories and 
main concepts for nested conditions and their satisfaction. Thereafter, \autoref{sec:restriction} 
presents the restriction of objects and nested conditions along type object morphisms.
Section~\ref{sec:amalgamation} contains the constructions and results concerning the amalgamation of objects and nested conditions
and in \autoref{sec:compatibility}, we present our main result showing the compatibility of initial satisfaction with amalgamation and restriction. 
Related work is discussed in \autoref{sec:relatedWork}.
Section~\ref{sec:conclusion} concludes the paper and discusses aspects of future work.
Appendix~\ref{sec:appendixB} contains the proofs that are not contained in the main part.
Additionally, App.~A in~\cite{SEM+12} provides formal details concerning  the transformation between both satisfaction relations 
and, moreover, their compatibility resp. incompatibility with Boolean operators. 

\section{General Framework and Concepts}
\label{sec:general}
In this section we recall some basic well-known concepts and notions and introduce some new notions that we are using in our approach. Our considerations are based on the framework of \M-adhesive categories. An $\mathcal{M}$-adhesive category \cite{EGH10} consists of a category $\cat{C}$ together with a class $\mathcal{M}$ of monomorphisms as defined in \autoref{def:MAdhCat} below. The concept of $\mathcal{M}$-adhesive categories generalizes that of adhesive \cite{LS04}, adhesive HLR \cite{EHPP06}, and weak adhesive HLR categories \cite{EEPT06}.
\begin{definition}[\M-Adhesive Category]\label{def:MAdhCat} 
An \M-adhesive category $(\cat{C},\M)$ is a category $\cat{C}$ together with a class $\M$ of monomorphisms satisfying:
	\begin{itemize} \itemsep0pt
		\item the class $\M$ is closed under isomorphisms, composition and decomposition,
		\item $\cat{C}$ has pushouts and pullbacks along \M-morphisms,
		\item \M-morphisms are closed under pushouts and pullbacks, and
		\item it holds the vertical van Kampen (short VK) property. This means that pushouts along \M-morphisms are \M-VK squares, i.\,e., pushout $(1)$ with $m \in \M$ is an \M-VK square, if for all commutative cubes $(2)$ with $(1)$ in the bottom, all vertical morphisms $a, b, c, d \in \M$ and pullbacks in the back faces we have that the top face is a pushout if and only if the front faces are pullbacks.
		
\begin{center}

\vspace{-1ex}
		$
			\begin{array}[t]{c}
				\xymatrix@C-2ex@R-2ex{
				                & A \ar[dl] \ar[rd]^m \ar@{}[dd]|{\textstyle (1)} \\
				C \ar[rd]     &
				                & B \ar[ld] \\
				                & D
				}
			\end{array}
		\hspace{2cm}
			\begin{array}[t]{c}
				\xymatrix@L+0ex@C-0ex@R-4.7ex{
				                & & A' \ar[dll] \ar[rd] \ar@{.>}[ddd]_(0.3){a}  |!{[dr];[ddl]}\hole \\
				C' \ar[rd] \ar[ddd]_c
				                & & & B' \ar[dll] \ar[ddd]^b \\
				                & D' \ar[ddd]_(.3)d &  &  & {(2)}\\
				                & & A  \ar@{}[ddl]|{(1)} \ar@{.>}[dll] |!{[ul];[dl]}\hole
				                       \ar@{.>}[rd]^m \\
				C  \ar[rd]  & & & B  \ar[dll] \\
				                & D
				}
			\end{array}
		$

%
%
%
%
%
%
%
%
%
\end{center}	
\end{itemize}
\end{definition}

\begin{remark}\label{rem:HorizontalVKP} 
In \autoref{sec:restriction}, \autoref{sec:amalgamation} and \autoref{sec:compatibility} we will also need the horizontal VK property, where the VK property is only required for commutative cubes with all horizontal morphisms in \M (see \cite{EGH10}), to show the compatibility of object composition and the corresponding restrictions. Note moreover, that an \M-adhesive category which also satisfies the horizontal VK property is a weak adhesive HLR category \cite{EEPT06}. 
\end{remark}
A set of transformation rules over an \M-adhesive category according to the DPO approach constitutes an \M-adhesive transformation system \cite{EGH10}. For various examples (graphs, Petri nets, etc.) see \cite{EEPT06}.

In \autoref{sec:restriction}, \autoref{sec:amalgamation} and \autoref{sec:compatibility} we are considering \M-adhesive categories with effective pushouts. According to \cite{LS05}, the formal definition is as follows.

\begin{definition}[Effective Pushout]\label{def:EfPO}\leavevmode\\
\begin{minipage}{10cm} \vspace{1mm}
Given \M-morphisms $a: B \to X$, $b: C \to X$ in an \M-adhesive category $(\cat{C},\M)$ and let $(A,p_1,p_2)$ be obtained by the pullback of $a$ and $b$. 
Then pushout $(1)$ of $p_1$ and $p_2$ is called effective, if the unique morphism $u: D \to X$ induced by pushout $(1)$ is an \M-morphism.
\end{minipage}
\begin{minipage}{5cm} \vspace{-7mm} \hspace{6mm}
$
\xymatrix@R-3.5ex@C+1ex{
& B \ar[dr]^{i_1} \ar@/^3ex/[drr]^(.7){a}
\\
A  \ar[ur]^{p_1} \ar[dr]_{p_2} \ar@{}[rr]|{(1)}
& & D \ar[r]^{u}
& X
\\
& C \ar[ur]_{i_2} \ar@/_3ex/[urr]_(.7){b}
}
$

%
%
\end{minipage}
\end{definition}

Nested conditions in this paper are defined as application conditions for rules in \cite{HP05}. Depending on the context in which a nested condition occurs, we use the terms application condition \cite{HP05} and constraint \cite{EEPT06}, respectively. Furthermore, we define positive nested conditions to be used in \autoref{sec:restriction}, \autoref{sec:amalgamation}, and \autoref{sec:compatibility} for our main results.

\begin{definition}[Nested Condition]\label{def:NestCond}
A \emph{nested condition} $ac_P$ over an object $P$ is inductively defined as follows: 
\begin{itemize}
\item $true$ is a nested condition over $P$.
\item For every morphism $a: P \rarr C$ and nested condition $ac_C$ over $C$, $\exists(a,ac_C)$ is a nested condition over $P$.
\item A nested condition can also be a Boolean formula over nested conditions. This means that also $\neg ac_P$, 
$\bigwedge_{i \in \mathcal{I}} ac_{P,i},$ and $\bigvee_{i \in \mathcal{I}} ac_{P,i}$ are nested conditions over $P$ for nested conditions $ac_P$, 
$ac_{P,i}$ $(i \in \mathcal{I})$ over $P$ for some index set $\mathcal{I}$.
\end{itemize}
Furthermore, we distinguish the following concepts:
\begin{itemize}
\item A nested condition is called \emph{application condition} in the context of rules and match morphisms.
\item A nested condition is called \emph{constraint} in the context of properties of objects.
\item A \emph{positive nested condition} is built up only by nested conditions of the form $true$, $\exists (a, ac)$, 
$\bigwedge_{i \in \mathcal{I}} ac_{P,i}\ $ and $\ \bigvee_{i \in \mathcal{I}} ac_{P,i}$, where $\ \mathcal{I} \neq \emptyset$.
\end{itemize}
\end{definition}

An example for a nested condition and its meaning is given below.

\begin{example}[Nested Condition]\label{exp:NC}
Given the nested condition $ac_P$ from \autoref{fig:Satisfaction} where all morphisms are inclusions.
Condition $ac_P$ means that the source of every $\mathtt{b}$-edge has a $\mathtt{b}$-self-loop and must be followed by some $\mathtt{c}$-edge such that subsequently, there is a path in the reverse direction visiting the source and target of the first $\mathtt{b}$-edge with precisely one $\mathtt{c}$-edge and one $\mathtt{b}$-edge in an arbitrary order.
We denote this nested condition by $ac_P = \exists (a_1,true) \ \wedge \ \exists (a_2,\exists(a_3,true) \vee \exists(a_4, true))$.
\end{example}

We are now defining inductively whether a morphism satisfies a nested condition (see \cite{EEPT06}).

\begin{definition}[Satisfaction of Nested Condition]\label{def:SatOfAC}
Given a nested condition $ac_P$ over $P$, a morphism $p: P \to G$ \emph{satisfies} $ac_P$ (see \autoref{fig:SatAC}), written $p \vDash ac_P$, if: 
\begin{itemize}
\item $ac_P = true$, or
\item $ac_P = \exists (a,ac_C)$ with $a: P \to C$ and there exists a morphism $q:C \to G \in \M$ such that $q \circ a = p$ and $q \vDash ac_C$, or
\item $ac_P = \neg ac_P^\prime$ and $p \not\vDash ac_P^\prime$, or
\item $ac_P = \bigwedge_{i \in \mathcal{I}} ac_{P,i}$ and for all $i \in \mathcal{I}$ holds $p \vDash ac_{P,i}$ , or
\item $ac_P = \bigvee_{i \in \mathcal{I}} ac_{P,i}$ and for some $i \in \mathcal{I}$ holds $p \vDash ac_{P,i}$.
\end{itemize}
\end{definition}



In the following we distinguish two kinds of satisfaction relations for constraints: General \cite{EEPT06} and initial satisfaction \cite{HP09}. Initial satisfaction is defined for constraints over an initial object of the base category while general satisfaction is considered for constraints over arbitrary objects. Intuitively, while general satisfaction requires that a constraint $ac_P$ is satisfied by every \M-morphism $p: P \rightarrow G$, intial satisfaction requires just the existence of an \M-morphism $p:P \rightarrow G$ which satisfies $ac_P$.


\vspace*{-3ex}
\begin{figure}[h]
  \centering
  \subfigure[Satisfaction of $ac_P$ by morphism $p$]{
    \label{fig:SatAC}

    \quad \quad
$
\xymatrix@R-1.5ex@C+1ex{
P \ar[rr]^a \ar[dr]_p
& \ar@{}[d]|{=} & C \triangleleft ac_C \ar[dl]^{q \ \vDash \  ac_C}
\\
&G
}
$
		\quad \quad
  }
  \hspace*{10ex} 
  \subfigure[Initial satisfaction of $ac_I$]{
    \label{fig:InitSatC}
$
\xymatrix@R-1.5ex@C+1ex{
I \ar[rr]^{i_P} \ar[dr]_{i_G}
& \ar@{}[d]|{=} & P \triangleleft ac_P \ar[dl]^{p \ \vDash \  ac_P}
\\
&G
}
$
  }
  \caption{Satisfaction of nested conditions}
  \label{fig:satisfaction}
\end{figure}
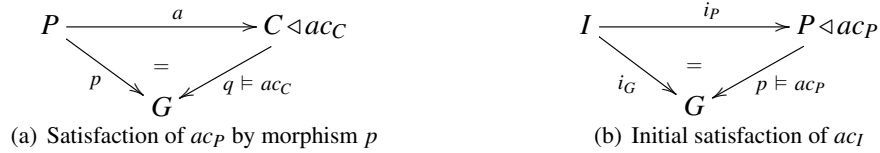

\begin{definition}[General Satisfaction of Constraints]\label{def:genSatisfaction}
Given a constraint $ac_P$ over $P$. An object $G$ generally satisfies $ac_P$, written $G \vDash ac_P$, if $\forall p: P \rightarrow G \in \M$. 
$p \vDash ac_P$ (see \autoref{fig:SatAC}).
\end{definition}


\begin{definition}[Initial Satisfaction of Constraints]\label{def:initSatisfaction}
Given a constraint $ac_I$ over an initial object $I$. An object $G$ initially satisfies $ac_I$, written $G \stackrel{I}{\vDash} ac_I$, if $i_G \vDash ac_I$ for the initial morphism $i_G: I \rightarrow G$.\\ Note, that for $ac_I = \exists(i_P, ac_P)$ we have 
\[ G \stackrel{I}{\vDash} ac_I \Leftrightarrow \exists p: P \rightarrow G \in \M.\ p \vDash ac_P\ \ \text{(see \autoref{fig:InitSatC})}. \] 
\end{definition}

This means that the general satisfaction corresponds more to the universal satisfaction of constraints while the initial satisfaction corresponds more to the existential satisfaction.



For positive nested conditions, we define solutions for the satisfaction problem. A solution $Q$ (a tree of morphisms) determines which morphisms are used to fulfill the satisfaction condition.

\begin{definition}[Solution for Satisfaction of Positive Nested Conditions]\label{def:solution}
Given a positive nested condition $ac_P$ over $P$ and a morphism $p: P \to G$. Then $Q$ is a solution for $p \vDash ac_P$ if:
\begin{itemize}
	\item $ac_P = true$ and $Q = \emptyset$, or
	\item $ac_P = \exists(a, ac_C)$ with $a: P \to C$ and $Q = (q, Q_C)$ with \M-morphism $q: C \to G$ such that $q \circ a = p$ and $Q_C$ is
		a solution for $q \vDash ac_C$ (see \autoref{fig:SatAC}), or
	\item $ac_P = \bigwedge_{i \in \mathcal{I}} ac_{P,i}$ and $Q = (Q_i)_{i \in \mathcal{I}}$ such that $Q_i$ is a solution for $p \vDash ac_{P,i}$
		for all $i \in \mathcal{I}$, or
	\item $ac_P = \bigvee_{i \in \mathcal{I}} ac_{P,i}$ and $Q = (Q_i)_{i \in \mathcal{I}}$ such that there is $j \in \mathcal{I}$ with solution
		$Q_j$ 
		for $p \vDash ac_{P,j}$ and for all $k \in \mathcal{I}$ with $k \neq j$ 
		it holds that $Q_k = \emptyset$.
\end{itemize}
\end{definition}

The following example demonstrates the general and initial satisfaction of constraints and gives their corresponding solutions.

\begin{example}[Satisfaction and Solution of Constraints]\label{exp:SatOfNestCond}\leavevmode
\begin{enumerate}
\item General Satisfaction\\
Consider the graph $G_A$ from \autoref{fig:Satisfaction} below and the constraint $ac_P$ from \autoref{exp:NC}. There are two possible \M-morphisms $p_1, p_2: P \rightarrow G_A$, where $p_1$ is an inclusion and $p_2$ maps $\mathtt{b_1}$ to $\mathtt{b_2}$ with the corresponding node mapping. For both matches $p_1$ and $p_2$, there is a $\mathtt{b}$-self-loop on the image of node $\mathtt{1}$, a $\mathtt{c}$-edge outgoing from the image of node $\mathtt{2}$, as well as the corresponding images for edges $\mathtt{b_2}$ and $\mathtt{c_2}$ in $C_3$. Thus, $G_A$ generally satisfies $ac_P$. \\
A corresponding solution for $p_1 \vDash ac_P$ is given by $Q_{gen} = (Q_i)_{i \in \left\{1, 2\right\}}$ with $Q_1 = (q_1, \emptyset)$ and $Q_2 = (q_2, (Q_j)_{j \in \left\{3, 4\right\}})$, where $Q_3=(q_3, \emptyset)$, $Q_4=\emptyset$ and $q_i: C_i \rightarrow G_A$ for $i = 1, 2, 3$ are inclusions.
	
\fig{Satisfaction}{.8} {General and initial satisfaction of constraints \label{fig:Satisfaction}}
	
\item Initial Satisfaction\\
Let 
$ac_I = \exists (i_P, ac_P)$ with $i_P$ as depicted in \autoref{fig:Satisfaction} and $ac_P$ from \autoref{exp:NC}. 
The graph $G_A$ initially satisfies $ac_I$ since there is $p_1: P \rightarrow G_A \in \M$ satisfying $ac_P$ as mentioned before. \\
A corresponding solution for $i_G \vDash ac_I$ is given by $Q_{init} = (p_1, Q_{gen})$ with $Q_{gen}$ from the example for general satisfaction.
\end{enumerate}
\end{example}

\begin{remark}\label{rem:typedAC} 
A nested condition is called \emph{typed} over a given type object, if all nested conditions in every of its nesting levels are also typed over the same type object. Furthermore, matches and corresponding solutions are required to be compatible with this type of object as well.
\end{remark}

%
%


\section{Restriction Along Type Morphisms}
\label{sec:restriction}
In this section, we present the restriction of objects, morphisms, positive nested conditions and their solutions along type morphisms which are the basis for the amalgamation of nested conditions in \autoref{sec:amalgamation}. 

\begin{genAssumption}
In this and the following sections, we consider an \M-adhesive category $(\cat{C},\M)$ satisfying the horizontal VK property (see \autoref{rem:HorizontalVKP}) and has effective pushouts (see \autoref{def:EfPO}).
\end{genAssumption}

\begin{definition}[Restriction along Type Morphism] \label{def:restr} 
Given an object $G_A$ typed over $TG_A$ by $t_{G_A}\colon $ \linebreak ${G_A \rarr TG_A}$ and $t: TG_B \rightarrow TG_A \in \M$, then $TG_B$ is called restriction of $TG_A$, 
$G_B$ is a restriction of $G_A$, and $t_{G_B}$ is a restriction of 
$t_{G_A}$, if (1) is a pullback. Given $a: G'_A \rarr G_A$, then $b$ is a \emph{restriction} of $a$ along type morphism $t$, written $b = Restr_t(a)$, if (2) is a pullback. 

\centering
$\begin{array}{l}
			\xymatrix{
			TG_A 
			& G_A \ar[l]_{t_{G_A}} 
			& G_A' \ar[l]_{a} \\
			TG_B \ar[u]^{t} 
			& G_B \ar[u]_{t_G} \ar[l]_{t_{G_B}} \ar@{}[ul]|{(1)}
			& G_B' \ar[u]_{t_G'} \ar[l]_{b} \ar@{}[ul]|{(2)}
			}
\end{array}$\\[1ex]
\end{definition}

For positive nested conditions, we can define the restriction recursively as restriction of their components.

\begin{definition}[Restriction of Positive Nested Conditions]\label{def:restrAC}
Given a positive nested condition $ac_{P_A}$ typed over $TG_A$ and let $TG_B$ be a restriction of it with $t: TG_B \rightarrow TG_A \in \M$. Then we define the restriction $ac_{P_B} = Restr_t(ac_{P_A})$ over the restriction $P_B$ of $P_A$ as follows: 
\begin{itemize}
\item The restriction of $true$ is $true$, 
\item the restriction of $\exists(a,ac_{C_A})$ is given by restriction of $a$ and $ac_{C_A}$, 
	i.\,e.,\ $ac_{P_B} = \exists(Restr_t(a),$ $Restr_t(ac_{C_A}))$, and 
\item the restriction of a Boolean formula is given by the restrictions of its components, i.\,e.,\  \linebreak
	$Restr_t(\neg ac_{P_A}^\prime) = \neg Restr_t(ac_{P_A}^\prime)$, 
	$Restr_t(\bigwedge_{i \in \mathcal{I}} ac_{P_A,i}) = \bigwedge_{i \in \mathcal{I}} Restr_t(ac_{P_A,i})$, and \linebreak
	$Restr_t(\bigvee_{i \in \mathcal{I}} ac_{P_A,i}) = \bigvee_{i \in \mathcal{I}} Restr_t(ac_{P_A,i})$.
\end{itemize}

\xcmatrix{@R-4ex}{
			& & P_A \ar@/_1.5ex/[dl]_{a} \\
			TG_A 
			& **[r] C_A \triangleleft ac_{C_A} \ar[l]
			& \\
			&& \\
			TG_B \ar[uu]^{t} 
			& **[r] C_B \triangleleft ac_{C_B}  \ar[uu]_{t_C} \ar[l]
			& \\
			& & P_B \ar@/^1.5ex/[ul]^{b} \ar[uuuu]_{t_P}
			}	

\end{definition}

Now we extend the restriction construction to solutions of positive nested conditions and show in \autoref{fact:satisfAC} that a restriction of a solution is also a solution for the corresponding restricted constraint.

\begin{definition}[Restriction of Solutions for Positive Nested Conditions]\label{def:restrSolution}
	Given a positive nested condition $ac_{P_A}$ typed over $TG_A$ together with a restriction $ac_{P_B}$ along $t: TG_B \to TG_A$. 
	For a morphism $p_A: P_A \to G$ and a solution $Q_A$ for $p_A \vDash ac_{P_A}$, the restriction $Q_B$ of $Q_A$ along $t$,
	written $Q_B = Restr_t(Q_A)$, is defined inductively as follows:
	\begin{itemize}
		\item If $Q_A$ is empty then also $Q_B$ is empty,
		\item if $ac_{P_A} = \exists(a: P_A \to C_A, ac_{C_A})$ and $Q_A = (q_A, Q_{CA})$,
			then $Q_B = (q_B, Q_{CB})$ such that $q_B$ and $Q_{CB}$ are restrictions of $q_A$ respectively $Q_{CA}$, and
		\item if $ac_{P_A} = \bigwedge_{i \in \mathcal{I}} ac_{P_A,i}$ or $ac_{P_A} = \bigvee_{i \in \mathcal{I}} ac_{P_A,i}$,
			and $Q_A = (Q_{A,i})_{i \in \mathcal{I}}$, then
			$Q_B = (Q_{B,i})_{i \in \mathcal{I}}$ such that $Q_{B,i}$ is a restriction of $Q_{A,i}$ for all $i \in \mathcal{I}$.
	\end{itemize} 
\end{definition}

\begin{fact}[Restriction of Solutions for Positive Nested Conditions]\label{fact:satisfAC}
Given a positive nested condition $ac_{P_A}$ and a match $p_A: P_A \rightarrow G_A$ over $TG_A$ with restrictions 
$ac_{P_B} = Restr_t(ac_{P_A})$, $p_B = Restr_t(p_A)$ along $t: TG_B \rightarrow TG_A$. Then for a solution $Q_A$ of $p_A \vDash ac_{P_A}$, 
there is a solution $Q_B = Restr_t(Q_A)$ 
for
$p_B \vDash ac_{P_B}$.
\end{fact}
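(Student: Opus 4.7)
The plan is to proceed by structural induction on the positive nested condition $ac_{P_A}$, verifying in each case that the tree $Q_B = Restr_t(Q_A)$ produced by \autoref{def:restrSolution} fulfils the requirements of \autoref{def:solution} for $p_B \vDash ac_{P_B}$. Throughout, $p_B = Restr_t(p_A)$ and $ac_{P_B} = Restr_t(ac_{P_A})$ are obtained via pullbacks along the type morphism $t$ as in \autoref{def:restr} and \autoref{def:restrAC}. The base case $ac_{P_A} = true$ is immediate because $Q_A = Q_B = \emptyset$ and $ac_{P_B} = true$.

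The key case is the existential one: $ac_{P_A} = \exists(a, ac_{C_A})$ with $Q_A = (q_A, Q_{CA})$, where $q_A \colon C_A \to G_A \in \M$ satisfies $q_A \circ a = p_A$ and $Q_{CA}$ is a solution for $q_A \vDash ac_{C_A}$. Setting $b = Restr_t(a)$, $q_B = Restr_t(q_A)$, and $Q_{CB} = Restr_t(Q_{CA})$, three properties must be established: (i) $q_B \in \M$, (ii) $q_B \circ b = p_B$, and (iii) $Q_{CB}$ is a solution for $q_B \vDash ac_{C_B}$. Property (i) holds because $q_B$ is a pullback of $q_A \in \M$ and \M-morphisms are closed under pullbacks (see \autoref{def:MAdhCat}). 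Property (iii) is the induction hypothesis applied to the sub-condition $ac_{C_A}$, the morphism $q_A$, and the solution $Q_{CA}$. For property (ii), I stack the two pullback squares that define $b$ and $q_B$ into a pullback rectangle whose top edge is $q_A \circ a = p_A$ and whose bottom edge is $q_B \circ b$; since $p_B$ is by definition the pullback of $p_A$ along the same vertical type morphisms, uniqueness of pullbacks yields $q_B \circ b = p_B$.

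The Boolean cases are routine. For $ac_{P_A} = \bigwedge_{i \in \mathcal{I}} ac_{P_A, i}$ and $Q_A = (Q_{A, i})_{i \in \mathcal{I}}$, restriction distributes over the conjunction by \autoref{def:restrAC} and over the tuple componentwise by \autoref{def:restrSolution}; applying the induction hypothesis to each $Q_{A, i}$ in parallel produces simultaneous solutions of all conjuncts. The disjunction case is analogous, with the induction hypothesis applied only to the distinguished satisfying branch $Q_{A, j}$ while the remaining components remain $\emptyset$ both before and after restriction.

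The single delicate step is property (ii) in the existential case, i.e.\ the pullback-composition argument that reconciles the restriction of the composite $q_A \circ a$ with the composite of the restrictions $q_B \circ b$; everything else is bookkeeping that mirrors the inductive definitions. Note that positivity is essential here, since \autoref{def:solution} is defined only for positive nested conditions and the argument has no analogue for the negation connective, consistent with the restriction to positive conditions adopted in this and the following sections.
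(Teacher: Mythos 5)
Your proof is correct and follows essentially the same structural induction as the paper; the only real difference is in the existential case, where you obtain $q_B \circ b = p_B$ by composing the pullback squares defining $b$ and $q_B$ and appealing to uniqueness of pullbacks, whereas the paper computes $t_G \circ q_B \circ b = q_A \circ t_C \circ b = q_A \circ a \circ t_P = t_G \circ p_B$ and cancels the monomorphism $t_G$. Be aware that uniqueness of pullbacks only yields an isomorphism $u$ with $t_P \circ u = t_P$ and $p_B \circ u = q_B \circ b$, so to get the on-the-nose equality you still need to cancel a monomorphism (e.g.\ $t_P \in \M$, being a pullback of $t \in \M$), which is a one-line repair.
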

%

\section{Amalgamation}
\label{sec:amalgamation}
The amalgamation of typed objects allows to combine objects of different types provided that they agree on a common subtype. 
This concept is already known in the context of different types of Petri net processes, such as open net processes \cite{BCEH01}
and algebraic high-level processes \cite{EG11}, which can be seen as special kinds of typed objects. 
In this section, we introduce a general definition for the amalgamation of typed objects. Moreover, we extend the concept to 
the amalgamation of positive nested conditions and their solutions. 

As  required for amalgamation, we discuss under which conditions morphisms can be composed via a span of restriction morphisms. 
Two morphisms $g_B$ and $g_C$ ``agree'' in a morphism $g_D$, if $g_D$ can be constructed as a common restriction and can be used 
as a composition interface for $g_B$ and $g_C$ as in \autoref{def:agreement}.

\begin{definition}[Agreement and Amalgamation of Typed Objects]\label{def:agreement}
Given a span $TG_B \stackrel{tg_{DB}}{\longleftarrow} TG_D \stackrel{tg_{DC}}{\longrightarrow} TG_C$, 
with $tg_{DB}, tg_{DC} \in \M$ and typed objects $G_B \stackrel{g_{B}}{\rarr} TG_B$, $G_C \stackrel{g_{C}}{\rarr} TG_C$ and $G_D \stackrel{g_{D}}{\rarr} TG_D$. 
We say $g_{B},g_{C}$ \emph{agree} in $g_{D}$, if $g_{D}$ is a restriction of $g_{B}$ and $g_{C}$, i.e.,~$Restr_{tg_{DB}}(g_{B}) = g_{D} = Restr_{tg_{DC}}(g_{C})$.

Given pushout (1) below with all morphisms in \M and typed objects $g_B, g_C$ agreeing in $g_D$. 
A morphism $g_A : G_A \rarr TG_A$ is called \emph{amalgamation} of $g_B$ and $g_C$ over $g_D$, written $g_A = g_B +_{g_D} g_C$, 
if the outer square is a pushout and $g_B, g_C$ are restrictions of $g_A$.

	\xcmatrix{@R-3ex@C9ex}{
		& & G_D \ar[d]^{g_D} \ar[ddll] \ar[ddrr]
		& &\\
		& & TG_D \ar[dl]^{tg_{DB}} \ar[dr]_{tg_{DC}}
		& & \\
		G_B \ar[r]_{g_B} \ar[ddrr]
		& TG_B \ar[dr]^{tg_{BA}} \ar@{}[rr]|{(1)}
		& 
		& TG_C \ar[dl]_{tg_{CA}}
		& G_C \ar[l]^{g_C} \ar[ddll] \\
		& & TG_A 
		& & \\
		& & G_A \ar[u]^{g_A}
	}
\end{definition}

%

\autoref{fact:amalgamation} is essentially based on the horizontal VK property.

\begin{fact}[Amalgamation of Typed Objects]\label{fact:amalgamation}
Given pushout (1) with all morphisms in \M as in \autoref{def:agreement}.
\begin{description} 
	\item \textbf{Composition.}
		Given $g_B,g_C$ agreeing in $g_D$, then there exists a unique amalgamation $g_A = g_B +_{g_D} g_C$. 
	\item \textbf{Decomposition.}
		Vice versa, given $g_A: G_A \rarr TG_A$, there are unique restrictions $g_B, g_C,$ and $g_D$ of $g_A$ such that $g_A = g_B +_{g_D} g_C$.
\end{description}
Here and in the following, uniqueness means uniqueness up to isomorphism.
\end{fact}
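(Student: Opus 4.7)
The plan is to set up a single cube whose bottom face is the type-level pushout (1), whose vertical morphisms are the typings $g_D, g_B, g_C, g_A$, and whose horizontal morphisms at both levels lie in $\M$; the horizontal VK property will then convert between ``top face is a pushout'' and ``front faces are pullbacks'', giving both directions of the statement. The only auxiliary ingredient required is effectiveness of pushouts, which certifies that the induced typing $g_A$ belongs to $\M$.

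For composition, closure of $\M$ under pullbacks makes the morphisms $g_{DB}\colon G_D \to G_B$ and $g_{DC}\colon G_D \to G_C$ arising from the agreement pullbacks lie in $\M$. I form the object-level pushout $G_A$ of $g_{DB}, g_{DC}$, with $\M$-morphisms $g_{BA}, g_{CA}$; its universal property applied to the composites $tg_{BA}\circ g_B$ and $tg_{CA}\circ g_C$ (which coincide on $G_D$ by the agreement equation) yields a unique $g_A\colon G_A \to TG_A$. The step I expect to be the main obstacle is showing $g_A\in\M$. Here effective pushouts are essential: since pushouts along $\M$-morphisms are also pullbacks in an $\M$-adhesive category, pasting the agreement pullbacks onto the pullback of $tg_{BA}$ and $tg_{CA}$ (which is $TG_D$) exhibits $G_D$ together with $g_{DB}, g_{DC}$ as the pullback of the $\M$-morphisms $tg_{BA}\circ g_B$ and $tg_{CA}\circ g_C$; effectiveness then gives $g_A\in\M$. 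With this in hand, horizontal VK applied to the cube in the direction ``bottom and top pushouts with back pullbacks $\Rightarrow$ front pullbacks'' yields $g_B = Restr_{tg_{BA}}(g_A)$ and $g_C = Restr_{tg_{CA}}(g_A)$, so $g_A$ is indeed the amalgamation.

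For decomposition I run the argument in reverse. Given $g_A$, set $g_B = Restr_{tg_{BA}}(g_A)$ and $g_C = Restr_{tg_{CA}}(g_A)$ by pullback (both in $\M$ by closure), and let $g_D = Restr_{tg_{DB}}(g_B)$. Pullback pasting along the commutativity equation $tg_{BA}\circ tg_{DB} = tg_{CA}\circ tg_{DC}$ then gives $g_D = Restr_{tg_{DC}}(g_C)$, so $g_B$ and $g_C$ agree in $g_D$. Horizontal VK is now applied in the opposite direction: with bottom a pushout and both back and front faces pullbacks, the top face must be a pushout, which is exactly the equation $g_A = g_B +_{g_D} g_C$. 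Uniqueness up to isomorphism in both parts is immediate from the universal properties of the pullbacks and pushouts used in the constructions.
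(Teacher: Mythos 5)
Your proposal follows essentially the same route as the paper: construct $G_A$ as the pushout of $G_B \leftarrow G_D \rightarrow G_C$ (all legs in $\M$ by closure under pullbacks and pushouts), induce $g_A$ by the universal property, and use the horizontal VK property in one direction to obtain the front pullbacks (composition) and in the other direction to obtain the top pushout (decomposition); uniqueness comes from the universal properties. That is exactly the paper's argument.

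One correction, though: the step you single out as ``the main obstacle'' --- showing $g_A \in \M$ via effective pushouts --- is neither needed nor valid here. The typing morphisms $g_B, g_C, g_D$ in \autoref{def:agreement} are arbitrary morphisms (a typing morphism of a typed graph is typically far from injective), so the composites $tg_{BA}\circ g_B$ and $tg_{CA}\circ g_C$ are not $\M$-morphisms in general and effectiveness of pushouts does not apply; nor does the definition of amalgamation require $g_A \in \M$. The horizontal VK property only demands that the \emph{horizontal} morphisms of the cube lie in $\M$, which they do (the top square consists of $\M$-morphisms by closure under pullbacks and pushouts, the bottom square is pushout (1) with all morphisms in $\M$), while the vertical typing morphisms are unconstrained. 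The effective-pushout argument you describe is the one the paper uses later, in the proof of \autoref{thm:ACviaRestr}, where the relevant morphisms $q_B, q_C$ genuinely are $\M$-morphisms; imported into \autoref{fact:amalgamation} it proves a false claim, but since nothing in the rest of your argument depends on it, the proof as a whole still goes through once that detour is deleted.
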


\begin{proof}
Given $g_B, g_C$ agreeing in $g_D$, we have that the upper two trapezoids are pullbacks. Now we construct $G_A$ as pushout over $G_B$ and $G_C$ via $G_D$, such that the outer diamond is a pushout. This leads to a unique induced morphism $g_A: G_A \rarr TG_A$, such that the diagram commutes and via the horizontal VK property we get that the lower two trapezoids are pullbacks and therefore $g_A = g_B +_{g_D} g_C$.

Vice versa, we can construct $G_B, G_C, G_D$ as restrictions such that the trapezoids become pullbacks, where $g_A: G_A \rarr TG_A$ and $TG_A, TG_B, TG_C, TG_D$ are given such that (1) is a pushout with \M-morphisms only. Then the horizontal VK property implies that the outer diamond is a pushout and $g_A$ is unique because of the universal property and $g_A = g_B +_{g_D} g_C$.

The uniqueness (up to isomorphism) of the amalgamated composition and decomposition constructions follows from uniqueness of pushouts
and pullpacks up to isomorphism.
\end{proof}

\begin{example}[Amalgamation of Typed Objects]\label{ex:amalgamation-objects}
	Figure~\ref{fig:amalgamation-objects} shows a pushout of type graphs $TG_A$, $TG_B$, $TG_C$ and $TG_D$. 
	\begin{description}
		\item \textbf{Composition.}
			Consider the typed graphs $G_B$, $G_C$ and $G_D$ typed over $TG_B$, $TG_C$ and $TG_D$, respectively. Graph $G_D$, containing 
			the same nodes as $G_B$ and $G_C$ and no edges, is the common restriction of $G_B$ and $G_C$. So, the type morphisms $g_B$ and $g_C$
			agree in $g_D$, which by \autoref{fact:amalgamation} means that there is an amalgamation $g_A = g_B +_{g_D} g_C$. It can be obtained
			by computing the pushout of $G_B$ and $G_C$ over $G_D$, leading to the graph $G_A$ that contains the \texttt{b}-edges of $G_B$ as well
			as the \texttt{c}-edges of $G_C$. The type morphism $g_A$ is induced by the universal property of pushouts, mapping all edges
			in the same way as $g_B$ and $g_C$.
		\item \textbf{Decomposition.}
			Vice versa, consider the graph $G_A$ typed over $TG_A$. We can restrict $G_A$ to the type graphs $TG_B$ and $TG_C$, leading
			to typed graphs $G_B$ and $G_C$, containing only the \texttt{b}- respectively \texttt{c}-edges of $G_A$. Restricting the graphs
			$G_B$ and $G_C$ to type graph $TG_D$, we get in both cases the graph $G_D$ that contains no edges, and we have that $g_A = g_B +_{g_D} g_C$.
	\end{description}
	
	\begin{figure}[htb]%
	\centering
	\includegraphics[width=0.5\textwidth]{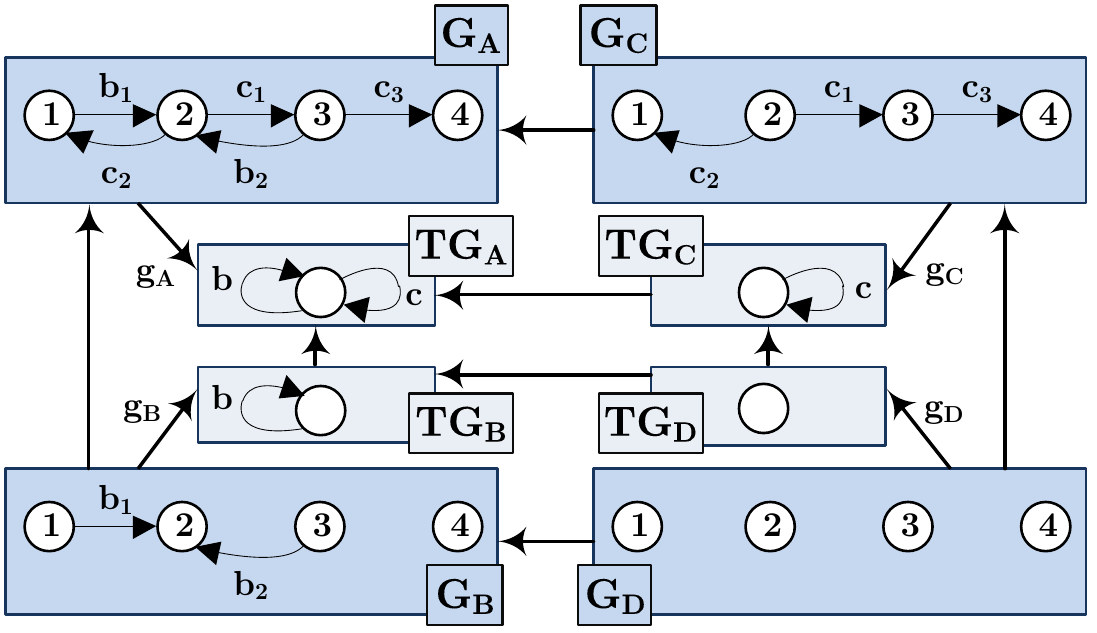}
	\caption{Amalgamation of typed graphs}%
	\label{fig:amalgamation-objects}
	\end{figure}
\end{example}

We already defined the restriction of positive nested conditions (\autoref{def:restrAC}) and their solutions \linebreak (\autoref{def:restrSolution}). Now we want to consider the case that we have two conditions, which have a common restriction and can be amalgamated.

\begin{definition}[Agreement and Amalgamation of Positive Nested Conditions]\label{def:agreement-amalgamation-conditions}
	Given a pushout (1) below with all morphisms in \M. 
	Two positive nested conditions $ac_{P_B}$ typed over $\TG_B$ and $ac_{P_C}$ typed over $\TG_C$ \emph{agree} in $ac_{P_D}$
	typed over $\TG_D$ if $ac_{P_D}$ is a restriction of $ac_{P_B}$ and $ac_{P_C}$.
	
	Given $ac_{P_B}$ and $ac_{P_C}$ agreeing in $ac_{P_D}$ then a positive nested condition 
	$ac_{P_A}$ typed over $TG_A$ is called \emph{amalgamation} of $ac_{P_B}$ and $ac_{P_C}$ over
	$ac_{P_D}$, written $ac_{P_A} = ac_{P_B} +_{ac_{P_D}} ac_{P_C}$, if $ac_{P_B}$ and $ac_{P_C}$ are restrictions of $ac_{P_A}$
	and $t_{PA} = t_{PB} +_{t_{PD}} t_{PC}$.
	In particular, we have $true_A = true_B +_{true_D} true_C$, short $true = true +_{true} true$.

	\xcmatrix{@R-3ex@C9ex}{
		& & P_D \ACright{ac_{P_D}} \ar[d]^{t_{PD}} \ar[ddll] \ar[ddrr]
		& &\\
		& & TG_D \ar[dl]^{tg_{DB}} \ar[dr]_{tg_{DC}}
		& & \\
		P_B \ACleft{ac_{P_B}} \ar[r]_{t_{PB}} \ar[ddrr]
		& TG_B \ar[dr]^{tg_{BA}} \ar@{}[rr]|{(1)}
		& 
		& TG_C \ar[dl]_{tg_{CA}}
		& P_C \ACright{ac_{P_C}} \ar[l]^{t_{PC}} \ar[ddll] \\
		& & TG_A 
		& & \\
		& & P_A \ACright[4.5ex,-0.5ex]{ac_{P_A}} \ar[u]^{t_{PA}}
	}
\end{definition}

In the following \autoref{fact:amalgamation-conditions}, we give a construction for the amalgamation of positive nested conditions and in \autoref{thm:ACviaRestr} for the corresponding solutions.

\begin{fact}[Amalgamation of Positive Nested Conditions]\label{fact:amalgamation-conditions}
	Given a pushout (1) as in \autoref{def:agreement-amalgamation-conditions} with all morphisms in \M. 
	\begin{description}
		\item \textbf{Composition.} 
			If there are positive nested conditions $ac_{P_B}$ and $ac_{P_C}$ typed over $TG_B$ and $TG_C$, respectively,
			agreeing in $ac_{P_D}$ typed over $TG_D$, then there exists a unique positive nested condition $ac_{P_A}$ typed over $TG_A$ such that 
			$ac_{P_A} = ac_{P_B} +_{ac_{P_D}} ac_{P_C}$.
		\item \textbf{Decomposition.}
			Vice versa, given a positive nested condition $ac_{P_A}$ typed over $TG_A$, there are unique restrictions $ac_{P_B}$, $ac_{P_C}$ and $ac_{P_D}$ 
			of $ac_{P_A}$ such that $ac_{P_A} = ac_{P_B} +_{ac_{P_D}} ac_{P_C}$.
	\end{description}
	The amalgamated composition and decomposition constructions are unique up to isomorphism.

\end{fact}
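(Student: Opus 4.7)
The plan is to prove both parts by structural induction on the positive nested condition, with \autoref{fact:amalgamation} serving as the atomic step at every nesting level to handle the premise object $P$ (together with its type morphism $t_P$) and, in the existential case, also the target $C$ and the morphism $a : P \to C$.

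For the \textbf{Composition} part, first observe that ``agreement'' of positive nested conditions unfolds recursively through the definition of restriction (\autoref{def:restrAC}): whenever $ac_{P_B}$ and $ac_{P_C}$ agree in $ac_{P_D}$, the three conditions necessarily have the same top-level shape, and their immediate subcomponents again agree in the same sense. The base case $true = true +_{true} true$ is built into the definition. In the existential case we have $ac_{P_X} = \exists(a_X : P_X \to C_X,\ ac_{C_X})$ for $X \in \{B,C,D\}$ with $a_B, a_C$ agreeing in $a_D$ and $ac_{C_B}, ac_{C_C}$ agreeing in $ac_{C_D}$. Applying \autoref{fact:amalgamation} to the type morphisms $t_{C_B}, t_{C_C}, t_{C_D}$ yields a unique amalgamation $t_{C_A} = t_{C_B} +_{t_{C_D}} t_{C_C}$ and, by the universal property of the pushout $C_A$ of $C_B \leftarrow C_D \to C_C$, a unique induced morphism $a_A : P_A \to C_A$ with $a_A = a_B +_{a_D} a_C$. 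The induction hypothesis, applied to $ac_{C_B}, ac_{C_C}, ac_{C_D}$, produces $ac_{C_A}$, and we set $ac_{P_A} := \exists(a_A, ac_{C_A})$. For $\bigwedge_{i \in \mathcal{I}} ac_{P_X,i}$ and $\bigvee_{i \in \mathcal{I}} ac_{P_X,i}$ we invoke the induction hypothesis componentwise over $\mathcal{I}$, which is legal because agreement is preserved index by index. That the resulting $ac_{P_A}$ has the required restrictions follows directly from the construction of each restriction step via pullbacks.

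For the \textbf{Decomposition} part I would simply set $ac_{P_B} := Restr_{tg_{BA}}(ac_{P_A})$, $ac_{P_C} := Restr_{tg_{CA}}(ac_{P_A})$, and $ac_{P_D} := Restr_{tg_{DB}}(ac_{P_B}) = Restr_{tg_{DC}}(ac_{P_C})$, where the last equality is shown inductively using the decomposition part of \autoref{fact:amalgamation} applied to each $t_{P}$ and, in the existential case, to each $t_C$ and $a$. Then reapplying the Composition direction (or directly the uniqueness part of \autoref{fact:amalgamation} at each level) gives back $ac_{P_A} = ac_{P_B} +_{ac_{P_D}} ac_{P_C}$. Uniqueness up to isomorphism propagates through the induction from the corresponding uniqueness statement in \autoref{fact:amalgamation} together with the universal property of pushouts used to induce $a_A$.

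The main obstacle will be the existential case, specifically verifying that the induced morphism $a_A$ really is obtained as an amalgamation in the precise sense of \autoref{def:agreement} and that its restrictions reproduce $a_B$ and $a_C$. This reduces to showing that the trapezoids involving $a_A$ are pullbacks, which in turn relies on the horizontal van Kampen property already exploited inside \autoref{fact:amalgamation}, applied once to the $P$-cube and once to the $C$-cube and then linked by the commutativity forced by $a_B, a_C, a_D$. Once this is established, the inductive clauses for $\bigwedge$ and $\bigvee$ are routine, and uniqueness follows formally.
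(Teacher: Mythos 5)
Your proposal matches the paper's proof essentially step for step: both argue by structural induction on the nested condition, invoke \autoref{fact:amalgamation} at each nesting level (the paper applies it with the pushout of the $C$s playing the role of the type-object pushout, so that $a = b +_d c$ together with the required pullback trapezoids comes out in one stroke via the horizontal VK property), handle conjunction and disjunction componentwise, and reduce uniqueness to the uniqueness of amalgamations and restrictions of typed objects. The only nitpick is that the induced morphism $a_A\colon P_A \to C_A$ arises from the universal property of the pushout $P_A$ of the $P$s (applied to $c_{BA}\circ b$ and $c_{CA}\circ c$), not of the pushout $C_A$ as you wrote, but this slip does not affect the argument.
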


\begin{remark}\label{rem:amalgamation-conditions}
	Given an amalgamation $ac_{P_A} = ac_{P_B} +_{ac_{P_D}} ac_{P_C}$ of positive nested conditions, we can conclude from the proof of \autoref{fact:amalgamation-conditions} (see \autoref{sec:appendixB}) that we also have corresponding amalgamations in each
	level of nesting.
\end{remark}

\begin{figure}[htb]%
\centering
\includegraphics[width=0.9\textwidth]{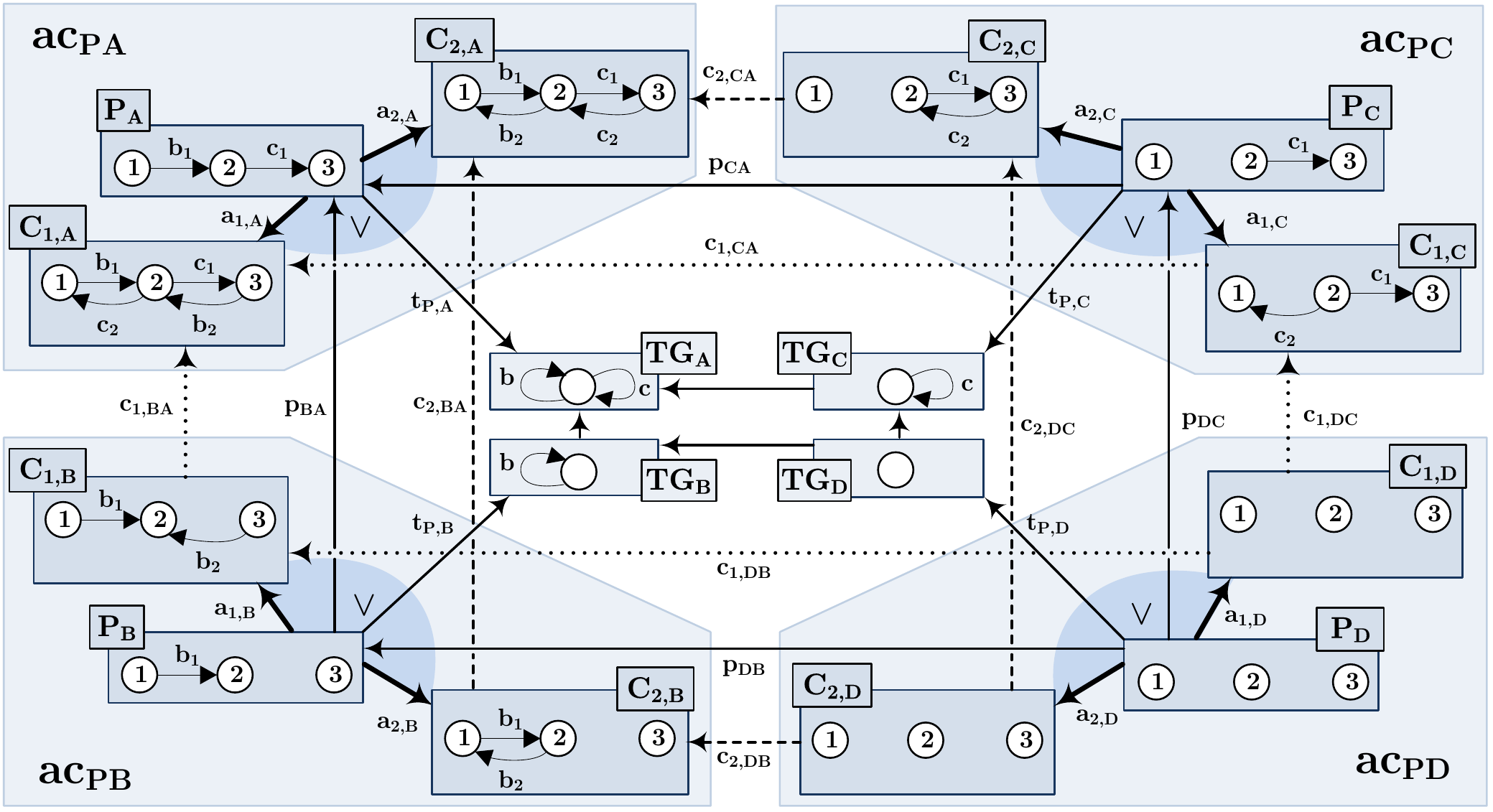}
\caption{Amalgamation of positive nested conditions}%
\label{fig:amalgamation-conditions}
\end{figure}

\begin{example}[Amalgamation of Positive Nested Conditions]\label{ex:amalgamation-conditions}
	Figure~\ref{fig:amalgamation-conditions} shows a pushout of typed graphs $TG_A$, $TG_B$, $TG_C$ and $TG_D$, and four positive nested conditions
	$ac_{P_A}$, $ac_{P_B}$, $ac_{P_C}$ and $ac_{P_D}$ typed over $TG_A$, $TG_B$, $TG_C$ and $TG_D$, respectively. For simplicity, the figure
	contains only the type morphisms of the $P$s, but there are also corresponding type morphisms for the $C$s, mapping all \texttt{b}-edges to
	\texttt{b} and all \texttt{c}-edges to \texttt{c}. 
	There is $ac_{P_A} = \bigvee_{i \in \{1,2\}} ac_{C_{i,A}}$ with $ac_{C_{i,A}} = \exists(a_{i,A}, true)$ for $i = 1,2$, and $ac_{P_B}$,
	$ac_{P_C}$ and $ac_{P_D}$ have a similar structure.
	\begin{description}
		\item \textbf{Composition.}
			We have that $t_{P_D}$ is a common restriction of $t_{P_B}$ and $t_{P_C}$, and also that $a_{i,D}$ is a common
			restriction of $a_{i,B}$ and $a_{i,C}$ for $i=1,2$. Thus, $ac_{P_D}$ is a common restriction of $ac_{P_B}$ and $ac_{P_C}$
			which means that $ac_{P_B}$ and $ac_{P_C}$ agree in $ac_{P_D}$. So by \autoref{fact:amalgamation-conditions} there
			exists an amalgamation $ac_{P_A} = ac_{P_B} +_{ac_{P_D}} ac_{P_C}$, and according to \autoref{rem:amalgamation-conditions}
			it can be obtained as amalgamation of its components. This means that we have an amalgamation 
			$t_{P_A} = t_{P_B} +_{t_{P_D}} t_{P_C}$ with pushout of the $P$s as shown in \autoref{fig:amalgamation-conditions},
			as well as amalgamations of the corresponding type morphisms of the $C$s, leading to the
			pushouts depicted in \autoref{fig:amalgamation-conditions} by dotted arrows for the $C_1$s and by dashed arrows for the $C_2$s.
			The morphisms $a_{1,A}$ and $a_{2,A}$ are obtained by the universal property of pushouts.
		\item \textbf{Decomposition.}
			The other way around, considering the condition $ac_{P_A}$, we can construct the restrictions $ac_{P_B}$ and $ac_{P_C}$ by deleting
			the \texttt{c}- respectively \texttt{b}-edges. Then, restricting $ac_{P_B}$ and $ac_{P_C}$ to $TG_D$ by deleting all remaining edges,
			we obtain the same condition $ac_{P_D}$ such that $ac_{P_A} = ac_{P_B} +_{ac_{P_D}} ac_{P_C}$.
	\end{description}
\end{example}

In order to answer the question, under which conditions such amalgamated positive nested conditions are satisfied, we need to define an amalgamation of their solutions. Afterwards, we show in the proof of \autoref{thm:ACviaRestr} that a composition of two solutions via an interface leads to a unique amalgamated solution and that a given solution for an amalgamated positive nested condition is the amalgamation of its unique restrictions.

\begin{definition}[Agreement and Amalgamation of Solutions for Positive Nested Conditions]\label{def:agreement-amalgamation-solution}
	Given pushout (1) below with all morphisms in \M, an amalgamation of typed objects $g_A = g_B +_{g_D} g_C$,
	and an amalgamation of positive nested conditions $ac_{P_A} = ac_{P_B} +_{ac_{P_D}} ac_{P_C}$ 
	with corresponding matches $p_A = p_B +_{p_D} p_C$.
	\begin{enumerate}
		\item Two solutions $Q_B$ for $p_B \vDash ac_{P_B}$ and $Q_C$ for $p_C \vDash ac_{P_C}$ \emph{agree} in a solution $Q_D$ for $p_D \vDash ac_{P_D}$,
			if $Q_D$ is a restriction of $Q_B$ and $Q_C$.
		
		\item Given solutions $Q_B$ for $p_B \vDash ac_{P_B}$ and $Q_C$ for $p_C \vDash ac_{P_C}$ agreeing in a solution $Q_D$ for $p_D \vDash ac_{P_D}$, 
			then a solution $Q_A$ for $p_A \vDash ac_{P_A}$ is called \emph{amalgamation} of
			$Q_B$ and $Q_C$ over $Q_D$, written $Q_A = Q_B +_{Q_D} Q_C$, if $Q_B$ and $Q_C$ are restrictions of $Q_A$.
	\end{enumerate}

	\xcmatrix{@R-3ex}{
		& P_A \ACleft{ac_{P_A}} \ar[dr]_{p_A}
		&&&& 
		& P_C \ACright{ac_{P_C}} \ar[lllll]^{p_{CA}} \ar[dl]^{p_C}
		& \\
		&& G_A \ar[dr]_{g_A}
		&&
		& G_C \ar[lll]^{g_{CA}} \ar[dl]^{g_C}
		&& \\
		&&& TG_A \ar@{}[dr]|{(1)}
		& TG_C \ar[l]_{tg_{CA}}
		&&& \\
		&&& TG_B \ar[u]^{tg_{BA}}
		& TG_D \ar[l]_{tg_{DB}} \ar[u]_{tg_{DC}}
		&&& \\
		&& G_B \ar[uuu]_{g_{BA}} \ar[ur]^{g_B}
		&&
		& G_D \ar[lll]^{g_{DB}} \ar[uuu]_{g_{DC}} \ar[ul]_{g_D}
		&& \\
		& P_B \ACleft{ac_{P_B}} \ar[uuuuu]_{p_{BA}} \ar[ur]^{p_B}
		&&&& 
		& P_D \ACright{ac_{P_D}} \ar[lllll]^{p_{DB}} \ar[uuuuu]_{p_{DC}} \ar[ul]_{p_D}
		& 
	}
\end{definition}

\begin{remark}\label{rem:agreement-amalgamation-solution}
	Note that by assumption $g_A = g_B +_{g_D} g_C$ in the definition above we already have a pushout over the $G$s, and by 
	$ac_{P_A} = ac_{P_B} +_{ac_{P_D}} ac_{P_C}$ we also have a pushout over the $P$s. 
\end{remark}

\begin{theorem}[Amalgamation of Solutions for Positive Nested Conditions]\label{thm:ACviaRestr}
Given pushout (1) as in \autoref{def:agreement-amalgamation-solution} with all morphisms in \M, 
an amalgamation of typed objects $g_A = g_B +_{g_D} g_C$,
and an amalgamation of positive nested conditions $ac_{P_A} = ac_{P_B} +_{ac_{P_D}} ac_{P_C}$ 
with corresponding matches $p_A = p_B +_{p_D} p_C$.
\begin{description}
	\item \textbf{Composition.}
		Given solutions $Q_B$ for $p_B \vDash ac_{P_B}$ and $Q_C$ for $p_C \vDash ac_{P_C}$ agreeing in a solution $Q_D$ for $p_D \vDash ac_{P_D}$,
		then there is a solution $Q_A$ for $p_A \vDash ac_{P_A}$ constructed as 
		amalgamation $Q_A = Q_B +_{Q_D} Q_C$.
	\item \textbf{Decomposition.}
		Given a solution $Q_A$ for $p_A \vDash ac_{P_A}$, then there are 
		solutions $Q_B$, $Q_C$ and $Q_D$ for \linebreak
		$p_B \vDash ac_{P_B}$, $p_C \vDash ac_{P_C}$ 
		and $p_D \vDash ac_{P_D}$, respectively, which are constructed as
		restrictions $Q_B$, $Q_C$ and $Q_D$ of $Q_A$ such that
		$Q_A = Q_B +_{Q_D} Q_C$.
\end{description}
The amalgamated composition and decomposition constructions are unique up to isomorphism.
%
\end{theorem}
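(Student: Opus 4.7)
The plan is to prove both the composition and decomposition claims simultaneously by structural induction on the positive nested condition $ac_{P_A}$. The key enabling fact is \autoref{rem:amalgamation-conditions}: the hypothesis $ac_{P_A} = ac_{P_B} +_{ac_{P_D}} ac_{P_C}$ delivers matching amalgamations at every nesting level, so in an existential subcondition $\exists(a_A, ac_{C_A})$ one may assume that $C_A = C_B +_{C_D} C_C$ is a pushout and $a_A = a_B +_{a_D} a_C$ is itself an amalgamation of morphisms. The second main ingredient is \autoref{fact:amalgamation}, used to build or decompose amalgamations of the $\M$-morphisms that witness existential subconditions.

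For composition, the base case $ac_{P_A} = true$ is immediate with $Q_A = \emptyset$. The central case is $ac_{P_A} = \exists(a_A, ac_{C_A})$ with $Q_B = (q_B, Q_{CB})$, $Q_C = (q_C, Q_{CC})$, $Q_D = (q_D, Q_{CD})$. Agreement of $Q_B$ and $Q_C$ in $Q_D$ implies that $q_D$ is a common restriction of $q_B$ and $q_C$, so \autoref{fact:amalgamation} delivers a unique amalgamation $q_A = q_B +_{q_D} q_C$ in $\M$ with codomain $G_A$. To verify $q_A \circ a_A = p_A$, I would apply the universal property of the pushout $P_A = P_B +_{P_D} P_C$: a short diagram chase through the amalgamation squares for $a$ and $q$ shows that $q_A \circ a_A$ and $p_A$ agree after precomposition with both pushout injections $P_B \to P_A$ and $P_C \to P_A$. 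Induction applied to $Q_{CB}, Q_{CC}, Q_{CD}$ yields $Q_{CA}$, and I set $Q_A = (q_A, Q_{CA})$.

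The conjunction case $\bigwedge_i$ reduces to induction componentwise. For disjunction $\bigvee_i$, I first observe from \autoref{def:restrSolution} that restriction sends $\emptyset$ to $\emptyset$ and sends every non-empty entry to a non-empty one, so the uniquely chosen non-empty index in $Q_B$, $Q_C$, $Q_D$ must be the same $j$; induction at index $j$ then provides $Q_{A,j}$, while all other entries of $Q_A$ are set to $\emptyset$. The decomposition direction proceeds symmetrically: the decomposition clause of \autoref{fact:amalgamation} splits $q_A$ into unique restrictions, induction splits $Q_{CA}$, and preservation of the chosen branch handles $\bigvee$. Uniqueness up to isomorphism in both directions is inherited from the uniqueness of pullbacks and pushouts already exploited in \autoref{fact:amalgamation} and \autoref{fact:amalgamation-conditions}.

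The main obstacle I anticipate is the compatibility check $q_A \circ a_A = p_A$ in the existential step: it ties together three parallel pushouts (over the $P$'s, $C$'s and $G$'s) and relies crucially on \autoref{rem:amalgamation-conditions} to ensure that $a_A$ is itself an amalgamation, so that the universal property of $P_A$ can identify the two candidate morphisms. A secondary subtlety is the disjunction case, where correctness depends on restriction preserving the ``choice of branch''; this has to be justified directly from \autoref{def:restrSolution} and the structural definition of solutions.
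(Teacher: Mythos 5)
Your overall architecture matches the paper's proof: simultaneous structural induction on $ac_{P_A}$, with \autoref{rem:amalgamation-conditions} supplying level-wise amalgamations, the pushout over the $P$s and joint epimorphicity of its injections giving $q_A \circ a_A = p_A$, and componentwise treatment of conjunction and disjunction. However, there is a genuine gap in the central existential step of the composition direction. You write that \autoref{fact:amalgamation} ``delivers a unique amalgamation $q_A = q_B +_{q_D} q_C$ in $\M$,'' but an amalgamation of typed objects in the sense of \autoref{def:agreement} is only required to make the outer square a pushout and to have $q_B, q_C$ as restrictions; it is \emph{not} required to be an $\M$-morphism, and \autoref{fact:amalgamation} does not assert this. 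Since a solution (\autoref{def:solution}) demands that the witnessing morphism $q_A \colon C_A \to G_A$ lie in $\M$, you must prove this separately. The induced morphism out of the pushout $C_A = C_B +_{C_D} C_C$ along the two $\M$-morphisms $g_{BA} \circ q_B$ and $g_{CA} \circ q_C$ is an $\M$-morphism precisely when that pushout is effective; the paper establishes this by a composed-pullback argument (showing $C_D$ is the pullback of $g_{BA} \circ q_B$ and $g_{CA} \circ q_C$ over $G_A$) and then invokes the effective-pushouts hypothesis from the General Assumption (\autoref{def:EfPO}). This assumption is in the paper exactly for this step; omitting it leaves the composition direction unproved.

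A secondary, repairable issue is your disjunction case: you argue that the ``uniquely chosen non-empty index'' must coincide across $Q_B$, $Q_C$, $Q_D$, but all entries of $Q_D$ (and hence of $Q_B$ and $Q_C$) may be empty, namely when the selected disjunct is $true$. The paper splits into the cases $Q_{D,j_D} = \emptyset$ and $Q_{D,j_D} \neq \emptyset$; in the first one concludes $ac_{P_D,j_D} = true$, hence $ac_{P_A,j_D} = true$, and takes the all-empty solution for $Q_A$. Your argument as stated presupposes a non-empty entry exists and so does not cover this case.
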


\begin{remark}\label{rem:amalgamation-solutions}
	From the proof of \autoref{thm:ACviaRestr} (see \autoref{sec:appendixB}) we can conclude that for a given amalgamation of solutions $Q_A = Q_B +_{Q_D} Q_C$, we also
	have corresponding amalgamations of its components.
\end{remark}

\section{Compatibility of Initial Satisfaction with Restriction and Amalgamation}
\label{sec:compatibility}
In this section we present our main result showing compatibility of initial satisfaction with amalgamation (\autoref{thm:compatibility-initial-satisfaction-amalgamation}) and restriction (\autoref{cor:compatibility-initial-satisfaction-restriction}) which are based on the amalgamation of solutions for positive nested conditions (\autoref{thm:ACviaRestr}). This main result allows to conclude the satisfaction of a constraint for a composed object from the satisfaction of the corresponding restricted constraints for the component objects. It is valid for initial satisfaction, but not for general satisfaction.

\begin{theorem}[Compatibility of Initial Satisfaction with Amalgamation]
\label{thm:compatibility-initial-satisfaction-amalgamation}
	Given pushout (1) below with all morphisms in \M, 
	an amalgamation of typed objects $g_A = g_B +_{g_D} g_C$, and an amalgamation of positive constraints $ac_A = ac_B +_{ac_D} ac_C$.
	Then we have:
\begin{description}
	\item \textbf{Decomposition.} Given a solution $Q_A$ for $G_A \iDash ac_A$, then there are solutions $Q_B$ for $G_B \iDash ac_B$, $Q_C$ for $G_C \iDash ac_C$
			and $Q_D$ for $G_D \iDash ac_D$ such that $Q_A = Q_B +_{Q_D} Q_C$.
	\item \textbf{Composition.} Vice versa, given solutions $Q_B$ for $G_B \iDash ac_B$ and $Q_C$ for $G_C \iDash ac_C$
			agreeing in a solution $Q_D$ for $G_D \iDash ac_D$, then there exists a solution $Q_A$ for $G_A \iDash ac_A$ such that 
			$Q_A = Q_B +_{Q_D} Q_C$.
\end{description}
	
	\xcmatrix{@R-3ex@C+2ex}{
		& I \ACleft{ac_A} \ar[dr]^(.6){i_{G_A}} 
		&&&& 
		& I  \ACright{ac_C} \ar[lllll]^{id_I} \ar[dl]_(.6){i_{G_C}} 
		& \\
		&& G_A \ar[dr]_{g_A} 
		&&
		& G_C \ar[lll]^{g_{CA}} \ar[dl]^{g_C}
		&& \\
		&&& TG_A \ar@{}[dr]|{(1)}
		& TG_C \ar[l]_{tg_{CA}}
		&&& \\
		&&& TG_B \ar[u]^{tg_{BA}}
		& TG_D \ar[l]^{tg_{DB}} \ar[u]_{tg_{DC}}
		&&& \\
		&& G_B \ar[uuu]_{g_{BA}} \ar[ur]^{g_B}
		&&
		& G_D \ar[lll]_{g_{DB}} \ar[uuu]^{g_{DC}} \ar[ul]_{g_D}
		&& \\
		& I \ACleft{ac_B} \ar[uuuuu]_{id_I} \ar[ur]_(.6){i_{G_B}}
		&&&& 
		& I \ACright{ac_D} \ar[lllll]_{id_I} \ar[uuuuu]^{id_I} \ar[ul]^(.6){i_{G_D}}
		& 
	}

\begin{proof}\leavevmode
\begin{description}
	\item \textbf{Decomposition.} By \autoref{def:initSatisfaction} a solution $Q_A$ for $G_A \iDash ac_A$ is also a solution for $i_{G_A} \vDash ac_A$,
				where $i_{G_A}$ is the unique morphism $i_{G_A}: I \to G_A$. Moreover, due to amalgamation $g_A = g_B +_{g_D} g_C$ the inner trapezoids
				in the diagram above are pullbacks. So by closure of \M under pullbacks we have that $g_{BA}, g_{CA}, g_{DB}, g_{DC} \in \M$
				which means that they are monomorphisms. Therefore, the outer trapezoids become pullbacks by standard category theory, which means
				that $i_{G_B}: I \to G_B$ is a restriction of $i_{G_A}$, $i_{G_C}: I \to G_C$ is a restriction of $i_{G_A}$, and
				$i_{G_D}: I \to G_D$ is a restriction of $i_{G_B}$ as well as of $i_{G_C}$. 
				
				Furthermore, the outer square in the diagram is a pushout, implying that we have an amalgamation $i_{G_A} = i_{G_B} +_{i_{G_D}} i_{G_C}$.
				Thus, using \autoref{thm:ACviaRestr} we obtain solutions $Q_B$ for $i_{G_B} \vDash ac_B$, $Q_C$ for $i_{G_C} \vDash ac_C$ 
				and $Q_D$ for $i_{G_D} \vDash ac_D$ such that $Q_A = Q_B +_{Q_D} Q_C$, and by \autoref{def:initSatisfaction}
				$Q_B$, $Q_C$ and $Q_D$ are solutions for $G_B \iDash ac_B$, $G_C \iDash ac_C$ and $G_D \iDash ac_D$, respectively.
			
			\item \textbf{Composition.} Now, given solutions $Q_B$, $Q_C$ and $Q_D$ for $G_B \iDash ac_B$, $G_C \iDash ac_C$ and $G_D \iDash ac_D$, respectively. Then by
				\autoref{def:initSatisfaction} we have that $Q_B$, $Q_C$ and $Q_D$ are solutions for $i_{G_B} \vDash ac_B$,
				$i_{G_C} \vDash ac_C$ and $i_{G_D} \vDash ac_D$, respectively. As shown in item 1, there is $i_{G_A} = i_{G_B} +_{i_{G_D}} i_{G_C}$
				and therefore, since $Q_B$ and $Q_C$ agree in $Q_D$, by \autoref{thm:ACviaRestr} we obtain a solution $Q_A$ for $i_{G_A} \vDash ac_A$
				such that $Q_A = Q_B +_{Q_D} Q_C$. Finally, \autoref{def:initSatisfaction} implies that $Q_A$ is a solution for $G_A \iDash ac_A$.
\end{description}
\end{proof}
\end{theorem}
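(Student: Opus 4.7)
The strategy is to reduce initial satisfaction to general satisfaction of the initial morphisms and then apply \autoref{thm:ACviaRestr}. By \autoref{def:initSatisfaction}, $G_X \iDash ac_X$ holds (with a given solution $Q_X$) if and only if $i_{G_X} \vDash ac_X$ with the same solution $Q_X$, where $i_{G_X}\colon I \rarr G_X$ is the initial morphism (for $X \in \{A,B,C,D\}$). So both directions of the theorem reduce to showing that the solutions $Q_B, Q_C, Q_D, Q_A$ for the general satisfactions $i_{G_X} \vDash ac_X$ compose and decompose in the claimed way. Because we already have amalgamations $g_A = g_B +_{g_D} g_C$ and $ac_A = ac_B +_{ac_D} ac_C$, once I can show that the initial morphisms themselves constitute an amalgamation $i_{G_A} = i_{G_B} +_{i_{G_D}} i_{G_C}$, \autoref{thm:ACviaRestr} applies immediately to yield both the decomposition and composition of solutions.

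The crucial technical step is therefore to verify this amalgamation of initial morphisms. First, the hypothesis $g_A = g_B +_{g_D} g_C$ gives that the four inner trapezoids (those involving the $g$'s and $tg$'s) are pullbacks; combined with the fact that pushout (1) consists entirely of $\M$-morphisms and that $\M$ is closed under pullback, I deduce that $g_{BA}, g_{CA}, g_{DB}, g_{DC} \in \M$, hence in particular monomorphisms. Second, $G_A$ arises as a pushout of $G_B$ and $G_C$ over $G_D$ (this is part of the amalgamation of typed objects, cf.\ \autoref{rem:agreement-amalgamation-solution}), so the outer square over the $G$'s is a pushout. Third, I need to argue that each outer trapezoid (e.g.\ $I \rarr G_B \rarr G_A$ compared with $I \rarr G_A$) is a pullback. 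This follows from the standard observation that when the target arrow $g_{BA}$ is a monomorphism and the initial object $I$ has a unique map into every object, the square with $I$ in the corner is automatically a pullback: any cone into $G_B$ and $I$ factors uniquely through $I$ by initiality. Putting everything together shows that $i_{G_B}, i_{G_C}, i_{G_D}$ are restrictions of $i_{G_A}$ along the respective type morphisms, so $i_{G_A} = i_{G_B} +_{i_{G_D}} i_{G_C}$.

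With the amalgamation of initial morphisms established, the decomposition direction follows by invoking \autoref{thm:ACviaRestr} on the amalgamated constraint $ac_A = ac_B +_{ac_D} ac_C$, the amalgamated match $i_{G_A} = i_{G_B} +_{i_{G_D}} i_{G_C}$, and the given solution $Q_A$; this yields restrictions $Q_B, Q_C, Q_D$ that form an amalgamation $Q_A = Q_B +_{Q_D} Q_C$, and each $Q_X$ is a solution for $i_{G_X} \vDash ac_X$, i.e.\ for $G_X \iDash ac_X$. For the composition direction, given $Q_B, Q_C$ agreeing in $Q_D$ as solutions of the respective initial satisfactions, the same theorem produces the amalgamated solution $Q_A = Q_B +_{Q_D} Q_C$ for $i_{G_A} \vDash ac_A$, i.e.\ for $G_A \iDash ac_A$.

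The main obstacle I anticipate is the verification that the outer trapezoids involving the initial object become pullbacks. While intuitively obvious (any cone must factor through $I$ by initiality), one has to be careful that this argument uses only the universal property of $I$ and the mono property of the horizontal morphisms $g_{BA}, g_{CA}, g_{DB}, g_{DC}$, rather than any additional adhesivity assumption; once that is settled, the remainder of the proof is a clean reduction to \autoref{thm:ACviaRestr} and uniqueness up to isomorphism is inherited from that theorem.
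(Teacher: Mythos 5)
Your proposal is correct and follows essentially the same route as the paper's own proof: reduce $G_X \iDash ac_X$ to $i_{G_X} \vDash ac_X$, use the pullback trapezoids and closure of $\M$ under pullbacks to get $g_{BA}, g_{CA}, g_{DB}, g_{DC} \in \M$ (hence monomorphisms), conclude via initiality of $I$ and the mono property that the outer trapezoids are pullbacks, obtain the amalgamation $i_{G_A} = i_{G_B} +_{i_{G_D}} i_{G_C}$, and then invoke \autoref{thm:ACviaRestr} for both directions. Your explicit justification of why the $I$-cornered squares are pullbacks is a correct elaboration of the paper's appeal to ``standard category theory.''
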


\begin{corollary}[Compatibility of Initial Satisfaction with Restriction]\label{cor:compatibility-initial-satisfaction-restriction}
Given type restriction \linebreak $t: TG_B \rarr TG_A \in \M$, object $G_A$ typed over $TG_A$ with restriction $G_B$, and a positive 
constraint $ac_A$ over initial object $I$ typed over $TG_A$ with restriction $ac_B$. Then $G_A \iDash ac_A$ 
implies $G_B \iDash ac_B$. Moreover, if $Q_A$ is a solution for $G_A \iDash ac_A$ then $Q_B = Restr_t(Q_A)$ is a solution
for $G_B \iDash ac_B$.

\begin{proof}
	Consider the diagram in \autoref{thm:compatibility-initial-satisfaction-amalgamation} with $G_C = G_A$, $G_D = G_B$, $ac_C = ac_A$
	and $ac_D = ac_B$. Then by standard category theory we have that all rectangles in the diagram are pushouts and the trapezoids are pullbacks.
	Thus, we have $g_A = g_B +_{g_B} g_A$ and, analogously, $ac_A = ac_B +_{ac_B} ac_A$ 
	with corresponding matches $i_{G_A} = i_{G_B} +_{i_{G_B}} i_{G_A}$.
	So, given a solution $Q_A$ for $G_A \iDash ac_A$, by item 1 of \autoref{thm:compatibility-initial-satisfaction-amalgamation} there
	is a solution $Q_B$ for $G_B \iDash ac_B$ with $Q_A = Q_B +_{Q_B} Q_A $ such that by \autoref{def:agreement-amalgamation-conditions} 
	$Q_B$ is a restriction of $Q_A$.
\end{proof}
\end{corollary}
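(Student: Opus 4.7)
My plan is to derive this corollary directly from \autoref{thm:compatibility-initial-satisfaction-amalgamation} by viewing a restriction as a degenerate amalgamation in which one side of the span is an identity. Specifically, I would instantiate the diagram of that theorem with $G_C := G_A$, $G_D := G_B$, $TG_C := TG_A$, $TG_D := TG_B$, $ac_C := ac_A$, $ac_D := ac_B$, taking $tg_{CA}$ and $tg_{DC}$ to be identities while $tg_{BA} = tg_{DB} = t$. Under this choice, pushout (1) collapses to a square whose right-hand morphisms are identities, which is trivially a pushout, and all four morphisms lie in \M since $t \in \M$ and identities lie in \M.

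First, I would check that this degenerate setup really yields amalgamations in the formal sense of \autoref{def:agreement}. For $g_A = g_B +_{g_B} g_A$, the outer diamond collapses (one side consists of identities) so it is automatically a pushout, and the restriction requirements reduce to $g_B = Restr_t(g_A)$, which is exactly the hypothesis of the corollary. An analogous argument, together with \autoref{fact:amalgamation-conditions}, gives $ac_A = ac_B +_{ac_B} ac_A$ using the componentwise definition of restriction of positive nested conditions (\autoref{def:restrAC}). Similarly, because $I$ is initial and the initial morphisms are unique, $i_{G_A} = i_{G_B} +_{i_{G_B}} i_{G_A}$ holds with respect to the same pushout.

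Having established these three amalgamations, I would invoke the decomposition part of \autoref{thm:compatibility-initial-satisfaction-amalgamation}: starting from a solution $Q_A$ for $G_A \iDash ac_A$, the theorem produces solutions $Q_B$ for $G_B \iDash ac_B$, $Q_D$ for $G_D \iDash ac_D$, and $Q_C$ for $G_C \iDash ac_C$ with $Q_A = Q_B +_{Q_D} Q_C$. Because $G_D = G_B$ and $G_C = G_A$ with identity morphisms on the right, uniqueness (up to isomorphism) of the decomposition forces $Q_C \cong Q_A$ and $Q_D \cong Q_B$, so the equation reads $Q_A = Q_B +_{Q_B} Q_A$. By \autoref{def:agreement-amalgamation-solution}(2), $Q_B$ is then a restriction of $Q_A$ along $t$, giving $Q_B = Restr_t(Q_A)$ as claimed.

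The main obstacle I anticipate is purely formal: justifying that the ``degenerate'' amalgamations truly fit the definitions rather than only resembling them. The pushout criterion is immediate, but verifying that $g_B$ and $ac_B$ are common restrictions along both $t$ and the identity requires a careful appeal to the uniqueness of pullbacks and the componentwise construction of restricted nested conditions; and the uniqueness-up-to-iso matching of $Q_C$ with $Q_A$ (and $Q_D$ with $Q_B$) needs the uniqueness clauses of \autoref{fact:amalgamation} and \autoref{thm:ACviaRestr}. Once these bookkeeping points are handled, both the implication and the explicit description of $Q_B$ follow with no further work.
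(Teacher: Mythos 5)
Your proposal is correct and follows essentially the same route as the paper: instantiate the amalgamation diagram of \autoref{thm:compatibility-initial-satisfaction-amalgamation} with $G_C = G_A$, $G_D = G_B$, $ac_C = ac_A$, $ac_D = ac_B$ so that one side of pushout (1) consists of identities, obtain the degenerate amalgamations $g_A = g_B +_{g_B} g_A$, $ac_A = ac_B +_{ac_B} ac_A$ and $i_{G_A} = i_{G_B} +_{i_{G_B}} i_{G_A}$, and then apply the decomposition part of the theorem to extract $Q_B$ as a restriction of $Q_A$. The only blemish is a harmless index swap (with $TG_D = TG_B$ and $TG_C = TG_A$ one should have $tg_{DB} = id$ and $tg_{DC} = t$, not the other way around), which does not affect the argument.
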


\begin{example}[Compatibility of Initial Satisfaction with Amalgamation]\label{ex:initial-satisfaction-amalgamation}
	Figure~\ref{f:amalgamation} shows the amalgamation of typed graphs $g_A = g_B +_{g_D} g_C$ from \autoref{ex:amalgamation-objects}
	and an amalgamation of positive nested conditions $ac_A = ac_B +_{ac_D} ac_C$. Note that we have $ac_A = \exists(i_{P_A}, ac_{P_A})$
	and $ac_B$, $ac_C$ and $ac_D$ with similar structure, where the amalgamation $ac_{P_A} = ac_{P_B} +_{ac_{P_D}} ac_{P_C}$ is presented in
	\autoref{ex:amalgamation-conditions}. 
	\begin{description}
		\item \textbf{Composition.} For $G_B \iDash ac_B$ we have the solution $Q_B = (q_B,(Q_{1,B}, Q_{2,B}))$ 
		with $Q_{1,B} = (q_{1,B},\emptyset)$ and $Q_{2,B} = \emptyset$, where $q_B$ and $q_{1,B}$ are inclusions.
		Moreover, we have similar solutions $Q_C$ for \linebreak $G_C \iDash ac_C$ and $Q_D$ for $G_D \iDash ac_D$. 
		According to \autoref{rem:amalgamation-solutions}, the amalgamation $Q_A = Q_B +_{Q_D} Q_C$ can be constructed by amalgamation
		of the components. 
		
		First, we explain in detail the amalgamation $q_{1,A} = q_{1,B} +_{q_{1,D}} q_{1,C}$. 
		Note that the graphs $G_A$, $G_B$, $G_C$ and $G_D$ can be considered as type graphs such that e.\,g.\ $C_{1,D}$ is typed over $G_D$
		by $q_{1,D}$. So, since $q_{1,D}$ is a common restriction of $q_{1,B}$ and $q_{1,C}$, we have that $q_{1,B}$ and $q_{1,C}$
		agree in $q_{1,D}$. This means that there is an amalgamation of typed objects $q_{1,A} = q_{1,B} +_{q_{1,D}} q_{1,C}$, where the inclusion
		$q_{1,A}$ maps all nodes and edges in the same way as $q_{1,B}$ and $q_{1,C}$.
		
		Moreover, for the empty solutions we have an empty solution as amalgamation, and thus, we have amalgamations of solutions
		$Q_{1,A} = Q_{1,B} +_{Q_{1,D}} Q_{1,C}= (q_{1,A}, \emptyset)$ and $Q_{2,A} = Q_{2,B} +_{Q_{2,D}} Q_{2,C}= \emptyset$.
		The amalgamation $q_A = q_B +_{q_D} q_C$ can be obtained analogously as described for $q_{1,A}$, and hence, we
		have $Q_A = Q_B +_{Q_D} Q_C = (q_A, (Q_{1,A}, Q_{2,A}))$, which is a solution for $G_A \iDash ac_A$.
		
		\item \textbf{Decomposition.} For $G_A \iDash ac_A$ we have a solution $Q_A = (q_A,(Q_{1,A}, Q_{2,A}))$ with $Q_{1,A} = (q_{1,A},\emptyset)$ 
		and $Q_{2,A} = \emptyset$ where $q_A$ and $q_{1,A}$ are inclusions. The restrictions $Q_B$, $Q_C$ and $Q_D$ of $Q_A$ are given by restrictions
		of the components. By computing the restrictions $q_{1,B}$, $q_{1,C}$ and $q_{1,D}$ of $q_{1,A}$, and similar the restrictions of $q_A$
		and $\emptyset$, we get as result again the solutions $Q_B$ for $G_B \iDash ac_B$, $Q_C$ for $G_C \iDash ac_C$,
		and $Q_D$ for $G_D \iDash ac_D$ as described in the composition case above.

	\end{description}
	\begin{figure}[htb]%
	\centering
	\includegraphics[width=\textwidth]{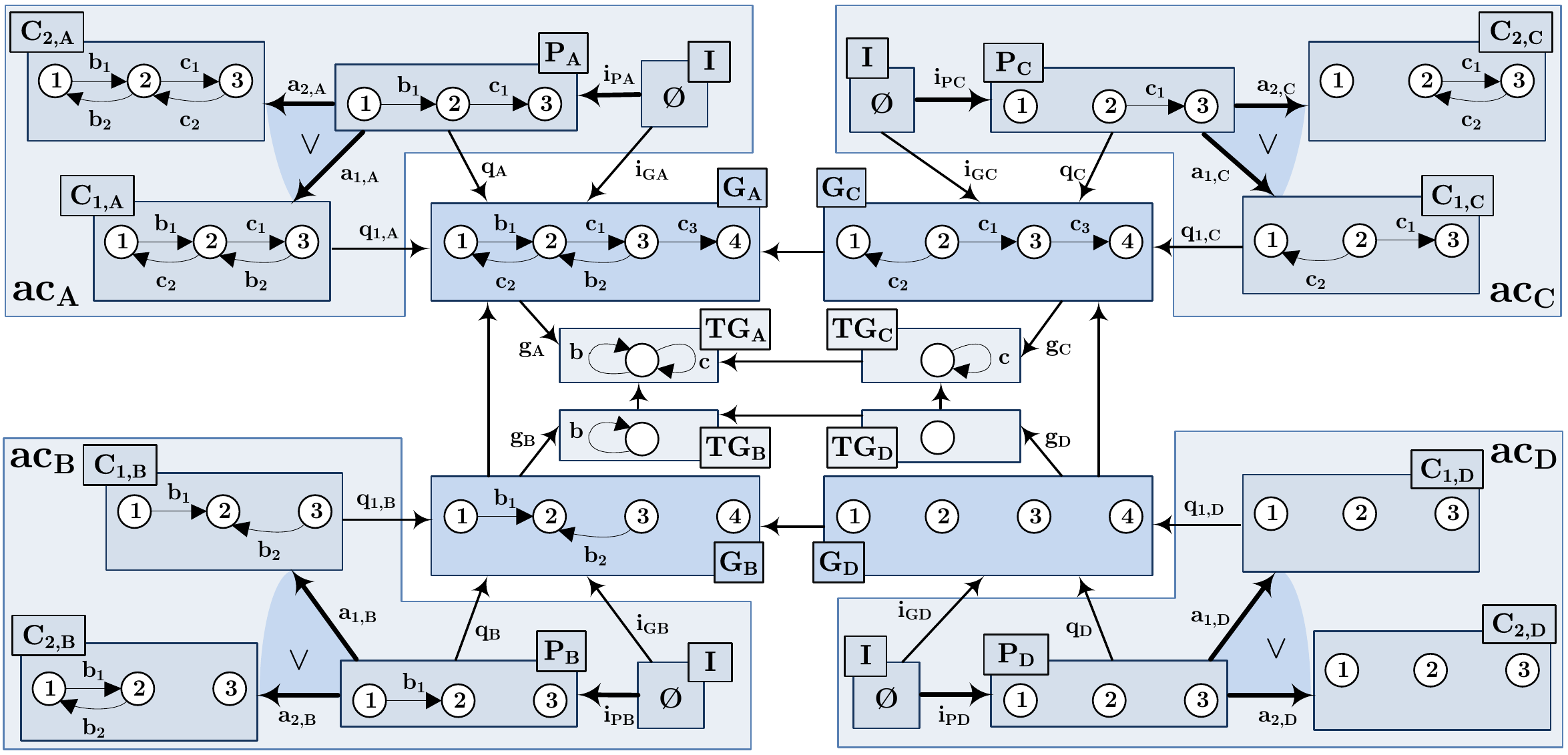}
	\caption{Amalgamation of solutions for initial satisfaction}%
	\label{f:amalgamation}
	\end{figure}
\end{example}

From \autoref{cor:compatibility-initial-satisfaction-restriction}, we know that initial satisfaction is compatible with restriction of
typed objects and constraints. In contrast, general satisfaction and restriction are not compatible in general. 
As the following example illustrates, it is possible that a typed object
generally satisfies a constraint while the same does not hold for their restrictions.

\begin{example}[Restriction of General Satisfaction Fails in General]
Figure~\ref{f:counterex} shows a restriction $G_B$ of the typed graph $G_A$ and a restriction $ac_{P_B}$ of constraint $ac_{P_A}$.
There are two possible matches $p_{1,A}, p_{2,A}: P_A \to G_A \in \M$ where $p_{1,A}$ is an inclusion and 
$p_{2,A}$ maps $\mathtt{b_1}$ to $\mathtt{b_2}$ and $\mathtt{c_1}$ to $\mathtt{c_2}$. Since for each of the matches the graph $G_A$
contains the required edges in the inverse direction, both of the matches satisfy $ac_{P_A}$. For $p_{i,A}$ we have $q_{i,A}$ with $q_{i,A} \circ a_A = p_{i,A}$ for $i = 1,2$. Thus, we have that $G_A \vDash ac_{P_A}$. 

For the constraint $ac_{P_B}$ there is a match $p_B: P_B \to G_B \in \M$ mapping edge $\mathtt{b_1}$ identically and node \texttt{3} to node \texttt{4}.
We have that $p_B \not\vDash ac_{P_B}$ because there is no edge from node \texttt{4} to node \texttt{2} in $G_B$, which means that $G_B \not\vDash ac_{P_B}$.
This is due to the fact that there is no match $p_A: P_A \to G_A \in \M$ such that $p_B$ is the restriction of $p_A$.



\begin{figure}[htb]%
\centering
\includegraphics[scale=.7]{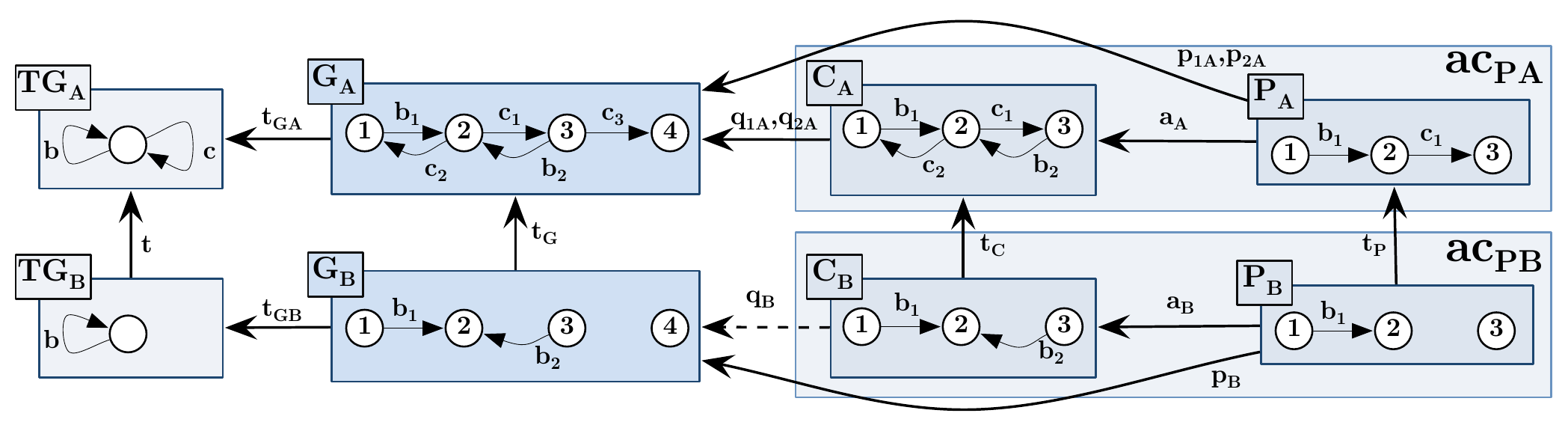}
\caption{Counterexample for restriction of general satisfaction}%
\label{f:counterex}
\end{figure}
\end{example}

\section{Related Work}
\label{sec:relatedWork}
The framework of $\M$-adhesive categories~\cite{EGH10} generalizes
various kinds of categories for high level replacement systems, e.g. 
adhesive~\cite{LS04}, quasi-adhesive~\cite{LS05}, partial VK square adhesive~\cite{Heindel10}, and weak-adhesive categories~\cite{EEPT06}.
Therefore, the results of this paper are applicable to all of them, where the category of typed attributed graphs is a prominent example.

The concepts of nested graph conditions~\cite{HP05} and first-order graph formulas~\cite{Courcelle97theexpression} are shown to be expressively equivalent in~\cite{HP09} using the translation between first-order logic and predicates on edge-labeled graphs without parallel edges~\cite{Rensink04}.

Multi-view modelling is an important concept in software engineering.
Several approaches have been studied and used, e.g. focussing on aspect oriented techniques~\cite{FranceRGG04}.
In this line, graph transformation (GT) approaches have been extended to support view concepts based on the integration of type graphs.
For this purpose, the concept of restriction along type morphisms has been studied and used intensively~\cite{EHTE97,EEEP10} including GT systems using the concept of inheritance and views~\cite{EEEP10,JT10}.
Instead of restriction of constraints considered in this paper, only more restrictive forward translations of view constraints have been studied in \cite{EEEP10} for the case of atomic constraints with general satisfaction leading to a result similar to \autoref{thm:compatibility-initial-satisfaction-amalgamation}.
The notions of initial and general satisfaction for nested conditions can be transformed one 
into the other~\cite{HP09}, but this transformation uses the Boolean operator negation that
is not present in positive constraints, for which, however, 
our main result on the compatibility of restriction and initial satisfaction holds.
Moreover, we have shown by counterexample that general satisfaction is not compatible with restriction in general, even if only positive constraints are considered.

\section{Conclusion}
\label{sec:conclusion}
Nested application conditions for rules and constraints for graphs and more general models have
been studied already in the framework of $\M$-adhesive transformation systems~\cite{EEPT06,EHPP06}.
The new contribution of this paper is to study compatibility of satisfaction with restriction and amalgamation.
This is important for large typed systems respectively objects, which can be decomposed by restriction and composed by amalgamation.
The main result in this paper shows that initial satisfaction of positive constraints is compatible with restriction and amalgamation (\autoref{thm:compatibility-initial-satisfaction-amalgamation} and \autoref{cor:compatibility-initial-satisfaction-restriction}).
The amalgamation construction is based on the horizontal van Kampen (VK) property, which is required in addition to the vertical VK property of $\M$-adhesive categories. To our best knowledge, this is the most interesting result for $\M$-adhesive transformation systems which is based on the horizontal VK property.
Note that the main result is not valid for general satisfaction of positive constraints nor for initial satisfaction of general constraints.
For future work, it is important to obtain weaker versions of the main result, which are valid for general satisfaction and constraints, respectively.

%

\bibliographystyle{eptcs}
\bibliography{Bibliography}


\newpage

\begin{appendix}
\section{Remaining Proofs}
\label{sec:appendixB}

In this appendix, we give the proofs for 
\autoref{fact:satisfAC}, 
\autoref{fact:amalgamation-conditions}
and \autoref{thm:ACviaRestr}.

\newtheorem*{factSatisfAC}{Fact~\ref*{fact:satisfAC}}
\begin{factSatisfAC}[Restriction of Solutions for Positive Nested Conditions]
Given a positive nested condition $ac_{P_A}$ and a match $p_A: P_A \rightarrow G_A$ over $TG_A$ with restrictions 
$ac_{P_B} = Restr_t(ac_{P_A})$, $p_B = Restr_t(p_A)$ along $t: TG_B \rightarrow TG_A$. Then for a solution $Q_A$ of $p_A \vDash ac_{P_A}$ 
there is a solution $Q_B = Restr_t(Q_A)$ 
for
$p_B \vDash ac_{P_B}$.
\end{factSatisfAC}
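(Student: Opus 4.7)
The plan is to proceed by structural induction on the positive nested condition $ac_{P_A}$, mirroring the recursive definitions of restriction of conditions (\autoref{def:restrAC}) and restriction of solutions (\autoref{def:restrSolution}). At each step, I will show that restricting a solution produces a valid solution for the restricted condition and the restricted match, using pullback properties inherited from the $\M$-adhesive setting.

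For the base case $ac_{P_A} = true$, the solution $Q_A = \emptyset$ restricts to $Q_B = \emptyset$, and since $Restr_t(true) = true$, this is trivially a solution for $p_B \vDash true$. For the conjunction case $ac_{P_A} = \bigwedge_{i \in \mathcal{I}} ac_{P_A,i}$ with $Q_A = (Q_{A,i})_{i \in \mathcal{I}}$, the induction hypothesis yields restricted solutions $Q_{B,i} = Restr_t(Q_{A,i})$ for $p_B \vDash Restr_t(ac_{P_A,i})$ for each $i$; since $Restr_t$ distributes over conjunction, the family $(Q_{B,i})_{i \in \mathcal{I}}$ is the required solution for $p_B \vDash ac_{P_B}$. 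The disjunction case is analogous: if $j \in \mathcal{I}$ is the witness index of $Q_A$ with nonempty $Q_{A,j}$, then $Q_{B,j} = Restr_t(Q_{A,j})$ witnesses the disjunction by induction, while the remaining $Q_{B,k}$ stay empty.

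The main obstacle lies in the existential case $ac_{P_A} = \exists(a, ac_{C_A})$ with $Q_A = (q_A, Q_{C_A})$, where $q_A : C_A \to G_A \in \M$ satisfies $q_A \circ a = p_A$. I set $q_B := Restr_t(q_A)$ and $Q_{C_B} := Restr_t(Q_{C_A})$, and must verify three things: (i) $q_B$ lies in $\M$, (ii) $q_B \circ Restr_t(a) = p_B$, and (iii) $Q_{C_B}$ is a solution for $q_B \vDash Restr_t(ac_{C_A})$. Point (i) follows from closure of $\M$ under pullbacks, since $q_B$ is obtained by pullback of $q_A$ along the type morphism $t$. Point (iii) is exactly the induction hypothesis applied to $Q_{C_A}$ along $t$. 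Point (ii) is the key technical step: using that $p_A = q_A \circ a$, the pullback pasting/decomposition lemma applied to the cube formed by the restrictions along $t$ yields that the restriction of the composite equals the composite of the restrictions, i.e.\ $Restr_t(p_A) = Restr_t(q_A) \circ Restr_t(a)$, which is exactly $p_B = q_B \circ Restr_t(a)$.

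Once (i)--(iii) are in place, $Q_B = (q_B, Q_{C_B})$ satisfies the clauses of \autoref{def:solution} for $ac_{P_B} = \exists(Restr_t(a), Restr_t(ac_{C_A}))$, completing the induction. The hardest part is verifying (ii); it requires being careful that the pullback square in \autoref{def:restr} defining restriction of morphisms genuinely composes with the pullback square for objects in the way needed, which is a direct application of the pullback composition lemma available in any category with pullbacks, so no appeal to the horizontal VK property is needed at this stage.
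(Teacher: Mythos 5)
Your proof is correct and follows essentially the same route as the paper's: structural induction on the condition with the existential case as the crux, establishing $q_B \in \M$ by closure of $\M$ under pullbacks and $q_B \circ Restr_t(a) = p_B$ by standard pullback reasoning (the paper cancels the monomorphism $t_G \colon G_B \to G_A$ along a commutativity chain where you invoke pullback composition --- these are interchangeable). Your observation that the horizontal VK property is not needed at this stage also matches the paper, which reserves it for the amalgamation results.
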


\begin{proof}\leavevmode
\begin{itemize}
	\item 
		For $ac_{P_A} = true$ the implication is trivial, because $Q_A$ is empty which means that also $Q_B$ is empty and thus a solution for 
		$p_B \vDash ac_{P_B}$ is empty, because $ac_{P_B}$ is also $true$.
	\item
		For $ac_{P_A} = \exists (a,ac_{C_A})$ we have that $Q_A = (q_A, Q_{CA})$ such that $q_A : C_A \rightarrow G_A \in \M$ with $q_A \circ a = p_A$ 
		and $Q_{CA}$ is a solution for $q_A \vDash ac_{C_A}$. 
		Then by $q_B = Restr_t(q_A) : C_B \rightarrow G_B$, we have $q_B \in \M$ and 
		we also have		$t_G: G_B \to G_A \in \M$, because $t \in \M$  (see \autoref{fig:restriction-for-satisfaction}). So for $ac_{P_B} = \exists (b, ac_{C_B})$ we have
		\[ t_G \circ q_B \circ b = q_A \circ t_C \circ b = q_A \circ a \circ t_P = p_A \circ t_P = t_G \circ p_B, \]
		which by monomorphism $t_G$ implies $q_B \circ b = p_B$. 
		
		Moreover, the fact that $Q_{CA}$ is a solution for $q_A \vDash ac_{C_A}$ implies that $Q_{CB} = Restr_t(Q_{CA})$ is a solution for 
		$q_B \vDash ac_{C_B}$ by induction hypothesis and hence the restriction $Q_B = (q_B, Q_{CB})$ of $Q_A$ is a solution for $p_B \vDash ac_{P_B}$.
		
		\begin{figure}
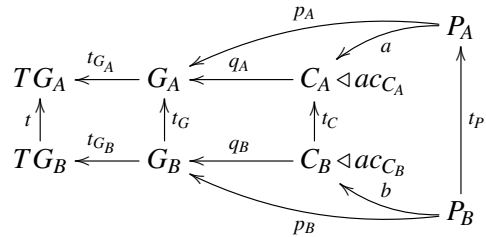

		\centering
		\xcmatrix{@R-4ex}{
			& & & P_A \ar@/_1.5ex/[dl]^{a} \ar@/_2ex/[dll]_{p_A}\\
			TG_A 
			& G_A \ar[l]_{t_{G_A}} 
			& **[r] C_A \triangleleft ac_{C_A} \ar[l]_{q_A}
			& \\
			& & & \\
			TG_B \ar[uu]^{t} 
			& G_B \ar[uu]_{t_G} \ar[l]_{t_{G_B}} 
			& **[r] C_B \triangleleft ac_{C_B} \ar[uu]_{t_C} \ar[l]_{q_B}
			& \\
			& & & P_B \ar[uuuu]_{t_P} \ar@/^1.5ex/[lu]_{b} \ar@/^2ex/[ull]^{p_B}
			}		
		
			\caption{Restriction of solution $q_A$ for $p_A \vDash \exists(a, ac_{C_A})$}
			\label{fig:restriction-for-satisfaction}
		\end{figure}

	\item
		Now, for $ac_{P_A} = \bigwedge_{i \in \mathcal{I}} ac_{P_A,i}$ we have $ac_{P_B} = \bigwedge_{i \in \mathcal{I}} Restr_t(ac_{P_A,i})$. By
		the fact that $Q_A$ is a solution for $p_A \vDash ac_{P_A}$, we have that $Q_A = (Q_{A,i})_{i \in \mathcal{I}}$ such that
		$Q_{A,i}$ is a solution for $p_A \vDash ac_{A,i}$ for all $i \in \mathcal{I}$. Thus, by induction hypothesis,
		we have restrictions $Q_{B,i} = Restr_t(Q_{A,i})$ that are solutions for $p_B \vDash Restr_t(ac_{P_A,i})$
		for all $i \in \mathcal{I}$. Hence, the restriction $Q_B = (Q_{B,i})_{i \in \mathcal{I}}$ of $Q_A$ is a solution for $p_B \vDash ac_{P_B}$.

	\item
		Finally, for $ac_{P_A} = \bigvee_{i \in \mathcal{I}} ac_{P_A,i}$ we have $ac_{P_B} = \bigvee_{i \in \mathcal{I}} Restr_t(ac_{P_A,i})$.
		By the fact that $Q_A$ is a solution for $p_A \vDash ac_{P_A}$ we have that $Q_A = (Q_{A,i})_{i \in \mathcal{I}}$ 
		such that for one 
		$j \in \mathcal{I}$ there is a solution $Q_{A,j}$ for $p_A \vDash ac_{A,j}$ and for all $k \neq j$ we have that 
		$Q_{A,k} = \emptyset$. 
		Thus, by induction hypothesis, the restriction $Q_{B,j}$ of $Q_{A,j}$ is a solution for
		$p_B \vDash Restr_t(ac_{P_A,j})$. Hence, we also have that the restriction $Q_B = (Q_{B,i})_{i \in \mathcal{I}}$ 
		is a solution for $p_B \vDash ac_{P_B}$ with $Q_{B,k} = \emptyset$ for $k \neq j$.
\end{itemize}
\end{proof}

\newtheorem*{factAmalgamationConditions}{Fact~\ref*{fact:amalgamation-conditions}}
\begin{factAmalgamationConditions}[Amalgamation of Positive Nested Conditions]
		Given a pushout (1) as in \autoref{def:agreement-amalgamation-conditions} with all morphisms in \M. 
	\begin{description}
		\item \textbf{Composition.} 
			If there are positive nested conditions $ac_{P_B}$ and $ac_{P_C}$ typed over $TG_B$ and $TG_C$, respectively,
			agreeing in $ac_{P_D}$ typed over $TG_D$ then there exists a unique positive nested condition $ac_{P_A}$ typed over $TG_A$ such that 
			$ac_{P_A} = ac_{P_B} +_{ac_{P_D}} ac_{P_C}$.
		\item \textbf{Decomposition.}
			Vice versa, given a positive nested condition $ac_{P_A}$ typed over $TG_A$, there are unique restrictions $ac_{P_B}$, $ac_{P_C}$ and $ac_{P_D}$ 
			of $ac_{P_A}$ such that $ac_{P_A} = ac_{P_B} +_{ac_{P_D}} ac_{P_C}$.
	\end{description}
	The amalgamated composition and decomposition constructions are unique up to isomorphism.
\end{factAmalgamationConditions}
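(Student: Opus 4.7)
The plan is to prove both directions by induction on the structure of the positive nested condition, reducing at each level to the amalgamation of typed objects and morphisms already guaranteed by \autoref{fact:amalgamation}.

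For the composition direction, I would proceed case by case on the outermost constructor. For $ac_{P_B} = true_B$, $ac_{P_C} = true_C$, $ac_{P_D} = true_D$ set $ac_{P_A} = true_A$, noting that the type morphisms $t_{PB}, t_{PC}, t_{PD}$ are the identities on the respective initial-like carriers, so \autoref{fact:amalgamation} trivially yields $true_A = true_B +_{true_D} true_C$. For $ac_{P_B} = \exists(a_B: P_B \to C_B, ac_{C_B})$ and analogously for $C$ and $D$, the agreement hypothesis gives that $t_{PD}$ and $t_{CD}$ are common restrictions, i.e., $a_D = \mathit{Restr}(a_B) = \mathit{Restr}(a_C)$ and $ac_{C_D}$ is a common restriction of $ac_{C_B}, ac_{C_C}$. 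First apply \autoref{fact:amalgamation} to the $P$s to obtain a unique amalgamation $t_{PA} = t_{PB} +_{t_{PD}} t_{PC}$, and similarly to the $C$s to obtain $t_{CA} = t_{CB} +_{t_{CD}} t_{CC}$. Then use the universal property of the pushout $P_A = P_B +_{P_D} P_C$ together with the compatibility $a_B \circ p_{DB} = p_{CB} \circ a_D$ (in the sense of $C$) to induce a unique morphism $a_A: P_A \to C_A$; one verifies it lies in $\mathcal{M}$ and has $a_B, a_C$ as restrictions along $tg_{BA}, tg_{CA}$. Invoking the induction hypothesis on $ac_{C_B}, ac_{C_C}, ac_{C_D}$ yields $ac_{C_A} = ac_{C_B} +_{ac_{C_D}} ac_{C_C}$, and we set $ac_{P_A} = \exists(a_A, ac_{C_A})$. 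For $\bigwedge$ and $\bigvee$ the construction is component-wise: apply the induction hypothesis to each index $i \in \mathcal{I}$ and set $ac_{P_A} = \bigwedge_{i \in \mathcal{I}} ac_{P_A, i}$ respectively $\bigvee_{i \in \mathcal{I}} ac_{P_A, i}$, using the same amalgamated $t_{PA}$ throughout.

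For the decomposition direction, given $ac_{P_A}$ typed over $TG_A$ I would likewise descend on the structure. At each level, first apply the decomposition part of \autoref{fact:amalgamation} to the underlying typed objects (and, in the $\exists$ case, also to the morphism $a_A: P_A \to C_A$ by taking the pullback-restrictions along $tg_{BA}$, $tg_{CA}$ and $tg_{BA} \circ tg_{DB}$) to obtain unique $P_B, P_C, P_D$ and $a_B, a_C, a_D$ with the amalgamation properties. Then recurse on the subconditions $ac_{C_A}$ (or on each index $i$ in the Boolean cases) and set $ac_{P_B} = \exists(a_B, ac_{C_B})$ etc. Because \autoref{def:restrAC} defines restrictions of positive nested conditions recursively via the restrictions of their components, the resulting $ac_{P_B}, ac_{P_C}, ac_{P_D}$ are exactly the restrictions of $ac_{P_A}$ along the respective type morphisms, and the composition case applied to them reproduces $ac_{P_A}$.

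Uniqueness up to isomorphism at each step is inherited from the uniqueness of pushouts and pullbacks used in \autoref{fact:amalgamation}, combined with the universal property used to induce the morphism $a_A$ in the $\exists$ case; structural induction then propagates uniqueness through the whole nested condition. The main technical obstacle, which justifies the detour through the proof rather than a one-line reduction to \autoref{fact:amalgamation}, is the $\exists$ case: one must show that the morphism $a_A$ obtained from the universal property of the $P$-pushout really restricts to $a_B, a_C, a_D$ along the type morphisms, which uses the horizontal van Kampen property (via \autoref{fact:amalgamation}) applied simultaneously to the cube over $P$ and the cube over $C$, and the fact that the restriction construction commutes with the composition $a_A$. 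Once this is established, the Boolean cases are routine bookkeeping and the induction closes.
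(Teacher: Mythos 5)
Your proof follows essentially the same route as the paper's: structural induction over the nested condition, reducing each level to the amalgamation of typed objects from \autoref{fact:amalgamation}, with the $\exists$-case as the only substantive step, where the induced morphism $a_A: P_A \to C_A$ is shown to restrict to $a_B, a_C$ via the horizontal VK property (the paper packages this by viewing $b,c,d$ as objects typed over the $C$-pushout and invoking \autoref{fact:amalgamation} once more). The only quibble is your claim to verify $a_A \in \M$, which is neither needed nor true in general, since the morphisms occurring in nested conditions per \autoref{def:NestCond} are arbitrary; nothing in the rest of your argument depends on it.
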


\begin{proof}\leavevmode
	\begin{description}
		\item \textbf{Composition.} We perform an induction over the structure of $ac_{P_D}$:
			\begin{itemize}
				\item $ac_{P_D} = true$. \\
					Then we also have $ac_{P_B} = true$ and $ac_{P_C} = true$, and the amalgamation $ac_{P_A}$ is trivially given by 
					$ac_{P_A} = true$.
				\item $ac_{P_D} = \exists(d, ac_{C_D})$ with $d: P_D \to C_D$. \\
					The assumption that $ac_{P_B}$ and $ac_{P_C}$ agree in $ac_{P_D}$ means that $ac_{P_D}$ is a restriction of 
					$ac_{P_B}$ and $ac_{P_C}$ and thus, by \autoref{def:restrAC}, 
					we have that $ac_{P_B} = \exists(b, ac_{C_B})$ with $b: P_B \to C_B$, $ac_{P_C} = \exists(c, ac_{C_C})$ with $c: P_C \to C_C$,
					$d$ is a restriction of $b$ and $c$, and $ac_{C_D}$ is a restriction of $ac_{C_B}$ and $ac_{C_C}$.
					This in turn means that $ac_{C_B}$ and $ac_{C_C}$ agree in $ac_{C_D}$ according to
					\autoref{def:agreement-amalgamation-conditions}. 
					So, by induction hypothesis, we obtain an amalgamation $ac_{C_A} = ac_{C_B} +_{ac_{C_D}} ac_{C_C}$, which implies that 
					$t_{CA} = t_{CB} +_{t_{CD}} t_{CC}$, i.\,e.,\ diagrams (2)-(5) below are pullbacks. By closure of \M under pullbacks, 
					we obtain from $tg_{BA}, tg_{CA} \in \M$ that also $c_{BA}, c_{CA} \in \M$.
					
					Moreover, the fact that $d$ is a restriction of $b$ and $c$ 
					means that (6)+(2) and (7)+(3) are pullbacks, which by pullback decomposition implies that (6) and (7) are pullbacks. 
					Note that $b$, $c$ and $d$ can be considered as typed over $C_B$, $C_C$ and $C_D$, respectively. 
					So, according to \autoref{def:agreement}, we obtain that $b$ and $c$ agree in $d$ with respect to the pushout of the $C$s, 
					leading to an amalgamation $a = b +_d c: P_A \to C_A$ 
					with pullbacks (8) and (9) by \autoref{fact:amalgamation}. 
					Hence, $ac_{P_A} = \exists(a, ac_{C_A})$ is the required amalgamation.
					
					\xcmatrix{@R-1.9ex@C+2ex}{
						& P_A \ACleft{ac_{P_A}} \ar[dr]_{a} \ar@{}[drrrrr]|{(8)} \ar@{}[rddddd]|{(9)}
						&&&& 
						& P_C \ACright{ac_{P_C}} \ar[lllll]_{p_{CA}} \ar[dl]^{c}
						& \\
						&& C_A \ACup{ac_{C_A}} \ar[dr]_{t_{CA}} \ar@{}[drrr]|{(4)} \ar@{}[rddd]|{(5)}
						&&
						& C_C \ACup{ac_{C_C}} \ar[lll]_{c_{CA}} \ar[dl]^{t_{CC}}
						&& \\
						&&& TG_A \ar@{}[dr]|{(1)}
						& TG_C \ar[l]_{tg_{CA}}
						&&& \\
						&&& TG_B \ar[u]^{tg_{BA}}
						& TG_D \ar[l]_{tg_{DB}} \ar[u]_{tg_{DC}}
						&&& \\
						&& C_B \ACdown{ac_{C_B}} \ar[uuu]^{c_{BA}} \ar[ur]^{t_{CB}}
						&&
						& C_D \ACdown{ac_{C_D}} \ar[lll]^{c_{DB}} \ar[uuu]_{c_{DC}} \ar[ul]_{t_{CD}} \ar@{}[ulll]|{(2)} \ar@{}[uuul]|{(3)}
						&& \\
						& P_B \ACleft{ac_{P_B}} \ar[uuuuu]^{p_{BA}} \ar[ur]^{b}
						&&&& 
						& P_D \ACright{ac_{P_D}} \ar[lllll]^{p_{DB}} \ar[uuuuu]_{p_{DC}} \ar[ul]_{d} \ar@{}[ulllll]|{(6)} \ar@{}[uuuuul]|{(7)}
						& 
					}
					
				\item $ac_{P_D} = \bigwedge_{i \in \mathcal{I}} ac_{P_D,i}$. \\
					Since $ac_{P_D}$ is a restriction of $ac_{P_B}$ and $ac_{P_C}$, they must be of the form
					$ac_{P_B} = \bigwedge_{i \in \mathcal{I}} ac_{P_B,i}$ and $ac_{P_C} = \bigwedge_{i \in \mathcal{I}} ac_{P_C,i}$.
					Moreover, since $ac_{P_B}$ and $ac_{P_C}$ agree in $ac_{P_D}$, we obtain that also $ac_{P_B,i}$ and $ac_{P_C,i}$ agree in $ac_{P_D,i}$
					for all $i \in \mathcal{I}$. So, by induction hypothesis, there are amalgamations
					$ac_{P_A,i} = ac_{P_B,i} +_{ac_{P_D,i}} ac_{P_C,i}$ such that $ac_{P_B,i}$ and $ac_{P_C,i}$ are restrictions of $ac_{P_A,i}$ 
					for all $i \in \mathcal{I}$. Hence, $ac_{P_A} = \bigwedge_{i \in \mathcal{I}} ac_{P_A,i}$ is the required amalgamation.
				\item The remaining case for disjunction works analogously to the case for conjunction.
			\end{itemize}
			The uniqueness of the amalgamation follows from the fact that we have an amalgamation in each level of nesting and the amalgamation
			of typed objects is unique by \autoref{fact:amalgamation}.
			
		\item \textbf{Decomposition.} We do an induction over the structure of $ac_{P_A}$:
			\begin{itemize}
				\item $ac_{P_A} = true$. \\
					This case is trivial because $true = true +_{true} true$.
				\item $ac_{P_A} = \exists(a, ac_{C_A})$ with $a: P_A \rightarrow C_A$. \\
					Then by induction hypothesis, there exist restrictions $ac_{C_B}$, $ac_{C_C}$ and $ac_{C_D}$ of $ac_{C_A}$ such that 
					$ac_{C_A} = ac_{C_B} +_{ac_{C_D}} ac_{C_C}$. Moreover, by \autoref{fact:amalgamation}, there are unique restrictions
					$b$, $c$ and $d$ of $a$ such that $a = b +_d c$. Hence, we have restrictions $ac_{P_B} = \exists(b, ac_{C_B})$,
					$ac_{P_C} = \exists(c, ac_{C_C})$ and $ac_{P_D} = \exists(d, ac_{C_D})$ of $ac_{P_A}$, and, as shown for the case of composition before, the fact that
					$ac_{C_A} = ac_{C_B} +_{ac_{C_D}} ac_{C_C}$ and $a = b +_d c$ implies that $ac_{P_A} = ac_{P_B} +_{ac_{P_D}} ac_{P_C}$.
				\item $ac_{P_A} = \bigwedge_{i \in \mathcal{I}} ac_{P_A,i}$. \\
					Then by induction hypothesis, there exist restrictions $ac_{P_B,i}$, $ac_{P_C,i}$ and $ac_{P_D,i}$ of $ac_{P_A,i}$
					such that $ac_{P_A,i} = ac_{P_B,i} +_{ac_{P_D,i}} ac_{P_C,i}$ for all $i \in \mathcal{I}$. Hence, 
					$ac_{P_B} = \bigwedge_{i \in \mathcal{I}} ac_{P_B,i}$, $ac_{P_C} = \bigwedge_{i \in \mathcal{I}} ac_{P_C,i}$
					and $ac_{P_D} = \bigwedge_{i \in \mathcal{I}} ac_{P_D,i}$ are restrictions of $ac_{P_A}$ such that
					$ac_{P_A} = ac_{P_B} +_{ac_{P_D}} ac_{P_C}$.
				\item Again, the remaining case for disjunction works analogously to the case for conjunction.
			\end{itemize}
			The uniqueness of the decomposition follows from the uniqueness of restrictions by pullback construction.
	\end{description}
\end{proof}

\newtheorem*{thmACviaRestr}{Theorem~\ref*{thm:ACviaRestr}}
\begin{thmACviaRestr}[Amalgamation of Solutions for Positive Nested Conditions]
Given pushout (1) as in \autoref{def:agreement-amalgamation-solution} with all morphisms in \M, 
an amalgamation of typed objects $g_A = g_B +_{g_D} g_C$,
and an amalgamation of positive nested conditions $ac_{P_A} = ac_{P_B} +_{ac_{P_D}} ac_{P_C}$ 
with corresponding matches $p_A = p_B +_{p_D} p_C$.
\begin{description}
	\item \textbf{Composition.}
		Given solutions $Q_B$ for $p_B \vDash ac_{P_B}$ and $Q_C$ for $p_C \vDash ac_{P_C}$ agreeing in a solution $Q_D$ for $p_D \vDash ac_{P_D}$,
		then there is a solution $Q_A$ for $p_A \vDash ac_{P_A}$ constructed as 
		amalgamation $Q_A = Q_B +_{Q_D} Q_C$.
	\item \textbf{Decomposition.}
		Given a solution $Q_A$ for $p_A \vDash ac_{P_A}$, then there are 
		solutions $Q_B$, $Q_C$ and $Q_D$ for 
		$p_B \vDash ac_{P_B}$, $p_C \vDash ac_{P_C}$ 
		and $p_D \vDash ac_{P_D}$, respectively, which are constructed as
		restrictions $Q_B$, $Q_C$ and $Q_D$ of $Q_A$ such that
		$Q_A = Q_B +_{Q_D} Q_C$.
\end{description}
The amalgamated composition and decomposition constructions are unique up to isomorphism.
\end{thmACviaRestr}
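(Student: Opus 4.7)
My plan is to prove both directions by structural induction on the positive nested condition $ac_{P_A}$, exploiting the fact (Remark~\ref{rem:amalgamation-conditions}) that an amalgamation of positive nested conditions decomposes into amalgamations at every nesting level. The key technical engine is \autoref{fact:amalgamation}, which gives us composition and decomposition of typed morphisms (using the horizontal VK property), together with \autoref{fact:satisfAC} for the restriction direction.

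For the \textbf{composition} direction, I would handle the cases as follows. If $ac_{P_A} = true$, then $Q_B$, $Q_C$, $Q_D$ are all empty and $Q_A := \emptyset$ trivially satisfies the requirements. If $ac_{P_A} = \exists(a, ac_{C_A})$ with $a = b +_d c$ (by \autoref{rem:amalgamation-conditions}), then $Q_B = (q_B, Q_{C_B})$, $Q_C = (q_C, Q_{C_C})$ and $Q_D = (q_D, Q_{C_D})$ with $q_B, q_C, q_D$ all in $\M$. Since $Q_B, Q_C$ agree in $Q_D$, the morphism $q_D$ is a common restriction of $q_B$ and $q_C$, so $q_B, q_C$ agree in $q_D$ in the sense of \autoref{def:agreement}; hence \autoref{fact:amalgamation} provides a unique amalgamation $q_A = q_B +_{q_D} q_C$ (and $q_A \in \M$ because \M-morphisms are closed under pushouts). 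I would then verify $q_A \circ a = p_A$ by noting that $q_A \circ a$ and $p_A$ are both amalgamations of $q_B \circ b$ and $q_C \circ c$ over $q_D \circ d$; uniqueness of the amalgamation (induced morphism of the pushout) forces equality. Finally, the induction hypothesis applied to $Q_{C_B}, Q_{C_C}, Q_{C_D}$ and $ac_{C_A} = ac_{C_B} +_{ac_{C_D}} ac_{C_C}$ yields $Q_{C_A} = Q_{C_B} +_{Q_{C_D}} Q_{C_C}$, so $Q_A := (q_A, Q_{C_A})$ is the desired solution. For $ac_{P_A} = \bigwedge_{i \in \mathcal{I}} ac_{P_A,i}$ (and analogously for $\bigvee$, where one identifies the unique index $j$ for which the chosen branch is non-empty on all three sides --- this has to agree by the restriction relations), I would apply the induction hypothesis componentwise.

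For the \textbf{decomposition} direction, I would again induct on $ac_{P_A}$. The base and Boolean cases are straightforward. For $ac_{P_A} = \exists(a, ac_{C_A})$ with $Q_A = (q_A, Q_{C_A})$, I would use \autoref{fact:amalgamation} to decompose $q_A$ uniquely as $q_A = q_B +_{q_D} q_C$ with \M-restrictions $q_B, q_C, q_D$. The identities $q_B \circ b = p_B$, $q_C \circ c = p_C$, $q_D \circ d = p_D$ follow by pullback diagram chasing from $q_A \circ a = p_A$ using that the inner trapezoids in the amalgamation diagram are pullbacks and the restriction morphisms are in \M (hence monic). The induction hypothesis decomposes $Q_{C_A}$ into $Q_{C_B}, Q_{C_C}, Q_{C_D}$ with $Q_{C_A} = Q_{C_B} +_{Q_{C_D}} Q_{C_C}$; setting $Q_B := (q_B, Q_{C_B})$, $Q_C := (q_C, Q_{C_C})$, $Q_D := (q_D, Q_{C_D})$ completes this step. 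Uniqueness up to isomorphism follows from uniqueness of pullbacks and from \autoref{fact:amalgamation} at each level.

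The main obstacle I expect is the existential case, specifically verifying that the amalgamated morphism $q_A$ actually makes the triangle $q_A \circ a = p_A$ commute (composition direction) and, dually, that the decomposed morphisms are in $\M$ and satisfy the respective triangle equalities (decomposition direction). Both hinge on using uniqueness of pushout-induced morphisms together with the fact that restriction morphisms are monic, and one must be careful that the ``typed over'' structure is consistent across the two nesting levels so that \autoref{fact:amalgamation} applies simultaneously to the $P$-level and the $C$-level. Everything else --- Boolean cases, propagation of the $\M$-property, and uniqueness --- is bookkeeping on top of \autoref{fact:amalgamation}.
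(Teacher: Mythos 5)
Your overall strategy---structural induction on $ac_{P_A}$, amalgamating respectively restricting the witness morphism at each nesting level via \autoref{fact:amalgamation} and recursing on the nested solutions---is exactly the paper's. The decomposition direction and the Boolean cases are handled essentially as in the paper (the paper is slightly more careful in the disjunction case of the composition direction, distinguishing the subcase where the selected branch of $Q_D$ is the empty solution of a $true$ sub-condition, so that the indices $j_B, j_C, j_D$ need not be compared directly; but this is bookkeeping).

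There is, however, one genuine gap in the composition direction, in the existential case: your justification that $q_A \in \M$ ``because \M-morphisms are closed under pushouts'' does not work. The morphism $q_A: C_A \to G_A$ is not obtained by pushing out an \M-morphism; it is the mediating morphism induced \emph{out of} the pushout object $C_A$ by the cocone $g_{BA}\circ q_B$, $g_{CA}\circ q_C$, and such induced morphisms need not lie in \M in a general \M-adhesive category. Since \autoref{def:solution} requires the witness morphism to be in \M, this must be proved, and it is precisely where the effective-pushouts hypothesis of the General Assumption enters: the paper shows, by composing the pullback squares expressing that $q_D$ is a restriction of $q_B$ and of $q_C$ with the pushout square of the $G$s (a pullback, since all its morphisms are in \M) and the square witnessing that $q_D$ is a monomorphism, that $C_D$ is the pullback of $g_{BA}\circ q_B$ and $g_{CA}\circ q_C$; effectiveness of the pushout of the $C$s then yields $q_A \in \M$. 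Without this step, $Q_A = (q_A, Q_{C_A})$ is not known to be a solution. The remainder of your argument---$q_A \circ a = p_A$ via joint epimorphicity of the pushout injections, $q_B$ and $q_C$ being restrictions of $q_A$ via the VK properties, and the induction on nested solutions---is sound and coincides with the paper's.
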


\begin{proof}\leavevmode
	\begin{description}
		\item \textbf{Composition.}
			We perform an induction over the structure of $ac_{P_A}$.
			\begin{itemize}
			\item $ac_{P_A} = true$. \\
				Then also $ac_{P_B}$, $ac_{P_C}$, $ac_{P_D}$ are true and we have empty solutions $Q_A$, $Q_B$, $Q_C$ and $Q_D$. Since the restriction of
				an empty solution is empty, we have that $Q_B$ and $Q_C$ are restrictions of $Q_A$.
				
			\item $ac_{P_A} = \exists(a, ac_{C_A})$ with $a: P_A \to C_A$. \\
				By \autoref{fact:amalgamation-conditions} (Composition), we have the following diagram, where all rectangles are pushouts
				and all trapezoids are pullbacks, and all horizontal and vertical morphisms are in \M.
				
				\xcmatrix{@R-2.1ex@C+2.2ex}{
					& P_A \ACleft{ac_{P_A}} \ar[dr]_{a} \ar@{}[drrrrr]|{(2')} \ar@{}[rddddd]|{(3')}
					&&&& 
					& P_C \ACright{ac_{P_C}} \ar[lllll]_{p_{CA}} \ar[dl]^{c}
					& \\
					&& C_A \ACup{ac_{C_A}} \ar[dr]_{t_{CA}} \ar@{}[drrr]|{(2)} \ar@{}[rddd]|{(3)}
					&&
					& C_C \ACup{ac_{C_C}} \ar[lll]_{c_{CA}} \ar[dl]^{t_{CC}}
					&& \\
					&&& TG_A \ar@{}[dr]|{(1)}
					& TG_C \ar[l]_{tg_{CA}} 
					&&& \\
					&&& TG_B \ar[u]^{tg_{BA}}
					& TG_D \ar[l]_{tg_{DB}} \ar[u]_{tg_{DC}}
					&&& \\
					&& C_B \ACdown{ac_{C_B}} \ar[uuu]^{c_{BA}} \ar[ur]^{t_{CB}}
					&&
					& C_D \ACdown{ac_{C_D}} \ar[lll]^{c_{DB}} \ar[uuu]_{c_{DC}} \ar[ul]_{t_{CD}} \ar@{}[ulll]|{(4)} \ar@{}[uuul]|{(5)}
					&& \\
					& P_B \ACleft{ac_{P_B}} \ar[uuuuu]^{p_{BA}} \ar[ur]^{b}
					&&&& 
					& P_D \ACright{ac_{P_D}} \ar[lllll]^{p_{DB}} \ar[uuuuu]_{p_{DC}} \ar[ul]_{d} \ar@{}[ulllll]|{(4')} \ar@{}[luuuuu]|{(5')}
					& 
				}

				Now, we consider solutions $Q_B = (q_B, Q_{CB})$, $Q_C = (q_C, Q_{CC})$ and $Q_D = (q_D, Q_{CD})$ for $p_B \vDash ac_{P_B}$, 
				$p_C \vDash ac_{P_C}$ and $p_D \vDash ac_{P_D}$, respectively, 
				such that $Q_D$ is a restriction of $Q_B$ and $Q_C$. Then we also have that $q_D$ is a restriction
				of $q_B$ and $q_C$, and thus
				\[ g_{BA} \circ q_B \circ c_{DB} = g_{BA} \circ g_{DB} \circ q_D = g_{CA} \circ g_{DC} \circ q_D = g_{CA} \circ q_C \circ c_{DC}. \]
				Together with the pushout over the $C$s, this implies a unique morphism $q_A: C_A \to G_A$ with $q_A \circ c_{BA} = g_{BA} \circ q_B$ and 
				$q_A \circ c_{CA} = g_{CA} \circ q_C$.
		
				\xcmatrix{@R-3.6ex}{
					P_A \ACleft{ac_{P_A}} \ar[dr]_(.3){a} \ar@/^1.5ex/[]+<1.5ex,-0.5ex>;[drdr]+<0ex,1.5ex>^(.5){p_A}
					&&&&&& 
					& P_C \ACright{ac_{P_C}} \ar[lllllll]_{p_{CA}} \ar[dl]^(.3){c} \ar@/_1.5ex/[]+<-1.5ex,-0.5ex>;[dldl]+<0ex,1.5ex>_(.5){p_C}
					& \\
					& C_A \ACleft[-4.5ex,-0.5ex]{ac_{C_A}} \ar[dr]_{q_A}
					&&&& 
					& C_C \ACright[4.5ex, -0.5ex]{ac_{C_C}} \ar[lllll]^{c_{CA}}|(.1)*+{\hole}|(.9)*+{\hole} \ar[dl]^{q_C}
					& \\
					&& G_A \ar[dr]_{g_A}
					&&
					& G_C \ar[lll]^{g_{CA}} \ar[dl]^{g_C}
					&& \\
					&&& TG_A
					& TG_C \ar[l]_{tg_{CA}}
					&&& \\
					&&& TG_B \ar[u]^{tg_{BA}}
					& TG_D \ar[l]_{tg_{DB}} \ar[u]_{tg_{DC}}
					&&& \\
					&& G_B \ar[uuu]_{g_{BA}} \ar[ur]^{g_B}
					&&
					& G_D \ar[lll]^{g_{DB}} \ar[uuu]_{g_{DC}} \ar[ul]_{g_D}
					&& \\
					& C_B \ACleft[-4.5ex,0.4ex]{ac_{C_B}} \ar[uuuuu]_{c_{BA}} \ar[ur]^{q_B}
					&&&& 
					& C_D \ACright[4.5ex,0.4ex]{ac_{C_D}} \ar[lllll]^{c_{DB}}|(.1)*+{\hole}|(.9)*+{\hole} \ar[uuuuu]_{c_{DC}} \ar[ul]_{q_D}
					& \\
					P_B \ACleft{ac_{P_B}} \ar[uuuuuuu]^{p_{BA}} \ar[ur]^(.3){b} \ar@/_1.5ex/[]+<1.5ex,0.5ex>;[urur]+<0ex,-1.5ex>_(.5){p_B}
					&&&&&& 
					& P_D \ACright{ac_{P_D}} \ar[lllllll]^{p_{DB}} \ar[uuuuuuu]_{p_{DC}} \ar[ul]_(.3){d}
						\ar@/^1.5ex/[]+<-1.5ex,0.5ex>;[ulul]+<0ex,-1.5ex>^(.5){p_D}
					& 
				}
				
				Moreover, we have
				\[ q_A \circ a \circ p_{BA} = q_A \circ c_{BA} \circ b = g_{BA} \circ q_B  \circ b = g_{BA} \circ p_B = p_A \circ p_{BA} \]
				and analogously $q_A \circ a \circ p_{CA} = p_A \circ p_{CA}$. Since $p_{BA}$ and $p_{CA}$ are jointly epimorphic, this implies that $q_A \circ a = p_A$.
				
				In order to show that $q_A \in \M$, we consider the following diagram in the left:
				
				\begin{center}
					\begin{minipage}{5.5cm}
						\xcmatrix{@R+1ex@C+2ex}{
							G_A & G_C \ar[l]_{g_{CA}} & C_C \ar[l]_{q_C}  \\
							G_B \ar[u]^{g_{BA}} & G_D \ar[l]^{g_{DB}} \ar[u]^{g_{DC}} \ar@{}[dr]|{(9)} \ar@{}[dl]|{(7)} \ar@{}[ur]|{(8)} \ar@{}[ul]|{(6)}
								& C_D \ar[l]^{q_D} \ar[u]_{c_{DC}} \\
							C_B \ar[u]^{q_B} & C_D \ar[l]^{c_{DB}} \ar[u]^{q_D} & C_D \ar[l]^{id_{C_D}} \ar[u]_{id_{C_D}} 
						}
					\end{minipage}
					\qquad
					\begin{minipage}{5.5cm}
						\xcmatrix{@R+0.2ex@C+0.5ex}{
						& & C_C \ar[dll]_{g_{CA} \circ q_C} \ar[dl]^{c_{CA}} & \\
						G_A & C_A \ar[l]|{q_A} & & C_D \ar[ul]_{c_{DC}} \ar[dl]^{c_{DB}} \\
						& & C_B \ar[ull]^{g_{BA} \circ q_B} \ar[ul]_{c_{BA}}
						}
					\end{minipage}
				\end{center}
				
				We have that (6) is a pushout with all morphisms in \M and thus also a pullback. Diagrams (7) and (8) are pullbacks by restriction,
				and (9) is a pullback because $q_D \in \M$ is a monomorphism. Hence, by composition of pullbacks, we obtain that the complete diagram
				is a pullback along \M-morphisms $g_{BA} \circ q_B$ and $g_{CA} \circ q_C$, which means that the pushout of the $C$s is effective
				(see \autoref{def:EfPO}), implying that $q_A \in \M$.
				
				It remains to show that $q_B$ and $q_C$ are restrictions of $q_A$. 
				In the following diagram, we have that (10) and (11) are pullbacks by restrictions, the $C$s and the $G$s form pushouts 
				(see \autoref{rem:agreement-amalgamation-solution})
				and all morphisms in (10)-(13) are in \M. 
				So, the horizontal as well as the vertical VK property implies that also (12) and (13) are pullbacks, which means that $q_B$ and $q_C$ 
				are restrictions of $q_A$.
		
				\xcmatrix{@R-3ex@C+2ex}{
					& C_A \ACleft{ac_{C_A}} \ar[dr]_{q_A} \ar@{}[drrrrr]|{(12)} \ar@{}[rddddd]|{(13)}
					&&&& 
					& C_C \ACright{ac_{C_C}} \ar[lllll]^{c_{CA}} \ar[dl]^{q_C} 
					& \\
					&& G_A \ar[dr]_{g_A}
					&&
					& G_C \ar[lll]^{g_{CA}} \ar[dl]^{g_C}
					&& \\
					&&& TG_A
					& TG_C \ar[l]_{tg_{CA}}
					&&& \\
					&&& TG_B \ar[u]^{tg_{BA}}
					& TG_D \ar[l]_{tg_{DB}} \ar[u]_{tg_{DC}}
					&&& \\
					&& G_B \ar[uuu]_{g_{BA}} \ar[ur]^{g_B}
					&&
					& G_D \ar[lll]^{g_{DB}} \ar[uuu]_{g_{DC}} \ar[ul]_{g_D}
					&& \\
					& C_B \ACleft{ac_{C_B}} \ar[uuuuu]_{c_{BA}} \ar[ur]^{q_B}
					&&&& 
					& C_D \ACright{ac_{C_D}} \ar[lllll]^{c_{DB}} \ar[uuuuu]_{c_{DC}} \ar[ul]_{q_D} \ar@{}[ulllll]|{(10)} \ar@{}[luuuuu]|{(11)}
					& 
				}
				
				Finally, $Q_D$ being a restriction of $Q_B$ and $Q_C$ means that $Q_{CD}$ is a restriction of $Q_{CB}$ and $Q_{CC}$
				by induction hypothesis, this implies a solution $Q_{CA}$ of $q_A \vDash ac_{C_A}$ such that $Q_{CB}$ and $Q_{CC}$ are restrictions of $Q_{CA}$.
				Hence, $Q_A = (q_A, Q_{CA})$ is a solution for $p_A \vDash ac_A$ such that $Q_B$ and $Q_C$ are restrictions of $Q_A$.
				
			\item $ac_{P_A} = \bigwedge_{i \in \mathcal{I}} ac_{P_A,i}$. \\
				We have $ac_{P_B} = \bigwedge_{i \in \mathcal{I}} ac_{P_B,i}$, $ac_{P_C} = \bigwedge_{i \in \mathcal{I}} ac_{P_C,i}$ and
				$ac_{P_D} = \bigwedge_{i \in \mathcal{I}} ac_{P_D,i}$ such that for all $i \in \mathcal{I}$ there is $ac_{P_D,i}$ a restriction of 
				$ac_{P_B,i}$ and $ac_{P_C,i}$.
				
				Moreover, given solutions $Q_B$, $Q_C$ and $Q_D$ of $p_B \vDash ac_{P_B}$, $p_C \vDash ac_{P_C}$ and $p_D \vDash ac_{P_D}$, respectively,
				we have $Q_B = (Q_{B,i})_{i \in \mathcal{I}}$, $Q_C = (Q_{C,i})_{i \in \mathcal{I}}$ and $Q_D = (Q_{D,i})_{i \in \mathcal{I}}$
				such that for all $i \in \mathcal{I}$ we have that $Q_{B,i}$, $Q_{C,i}$ and $Q_{D,i}$ are solutions for $p_B \vDash ac_{P_B,i}$, 
				$p_C \vDash ac_{P_C,i}$ and $p_D \vDash ac_{P_D,i}$, respectively, and $Q_{D,i}$ is a restriction of $Q_{B,i}$ and $Q_{C,i}$.
				
				Then, by induction hypothesis, there are solutions $Q_{A,i}$ for $p_A \vDash ac_{P_A,i}$ for all $i \in \mathcal{I}$ such that $Q_{B,i}$
				and $Q_{C,i}$ are restrictions of $Q_{A,i}$. Hence, $Q_A = (Q_{A,i})_{i \in \mathcal{I}}$ is the required solution for $p_A \vDash ac_{P_A}$.
				
			\item $ac_{P_A} = \bigvee_{i \in \mathcal{I}} ac_{P_A,i}$. \\
				We have $ac_{P_B} = \bigvee_{i \in \mathcal{I}} ac_{P_B,i}$, $ac_{P_C} = \bigvee_{i \in \mathcal{I}} ac_{P_C,i}$ and
				$ac_{P_D} = \bigvee_{i \in \mathcal{I}} ac_{P_D,i}$ such that for all $i \in \mathcal{I}$ there is $ac_{P_D,i}$ a restriction of $ac_{P_B,i}$
				and $ac_{P_C,i}$.
				
				Moreover, given solutions $Q_B$, $Q_C$ and $Q_D$ of $p_B \vDash ac_{P_B}$, $p_C \vDash ac_{P_C}$ and $p_D \vDash ac_{P_D}$, respectively.
				Then we have $Q_B = (Q_{B,i})_{i \in \mathcal{I}}$, $Q_C = (Q_{C,i})_{i \in \mathcal{I}}$ and $Q_D = (Q_{D,i})_{i \in \mathcal{I}}$
				such that for some $j_B,j_C,j_D \in \mathcal{I}$ we have that $Q_{B,j_B}$, $Q_{C,j_C}$ and $Q_{D,j_D}$ are solutions for $p_B \vDash ac_{P_B,j_B}$, 
				$p_C \vDash ac_{P_C,j_C}$ and $p_D \vDash ac_{P_D,j_D}$, respectively, and for all $k_B, k_C, k_D \in \mathcal{I}$ with $k_B \neq j_B$,
				$k_C \neq j_C$ and $k_D \neq j_D$ we have that $Q_{B,k_B}$, $Q_{C, k_C}$ and $Q_{D,k_D}$ are empty.
				Furthermore, $Q_{D,i}$ is a restriction of $Q_{B,i}$ and $Q_{C,i}$ for all $i \in \mathcal{I}$ .
				
				\begin{description}
					\item \textbf{Case 1.} $Q_{D,j_D} = \emptyset$. \\
						Then we have $Q_{D,j} = \emptyset$ for all $j \in \mathcal{I}$. 
						According to \autoref{def:restrSolution}, only the restriction of an empty solution is empty, implying that we also
						have $Q_{B,j} = Q_{C,j} = \emptyset$ for all $j \in \mathcal{I}$.
						Moreover, since $Q_{D,j_D}$ is a solution for $p_D \vDash ac_{P_D,j_D}$, we can conclude that 
						$ac_{P_D,j_D} = true$,
						and by the fact that $ac_{P_D,j_D}$ is a restriction of $ac_{P_A,j_D}$, $ac_{P_B,j_D}$ and $ac_{P_C,j_D}$
						it follows that also $ac_{P_A,j_D} = true$, $ac_{P_B,j_D} = true$ and $ac_{P_C,j_D} = true$. 
						So, as shown above, there is a solution $Q_{A,j_D} = \emptyset$ for $p_A \vDash ac_{P_A,j_D}$. Hence, $Q_A = (Q_{A,i})_{i \in \mathcal{I}}$
						with $Q_{A,i} = \emptyset$ for all $i \in \mathcal{I}$ is a solution for $p_A \vDash ac_{P_A}$ such that $Q_B$ and $Q_C$ are restrictions
						of $Q_A$.
					\item \textbf{Case 2.} $Q_{D,j_D} \neq \emptyset$. \\
						Then according to \autoref{def:restrSolution}, there are also $Q_{B,j_D} \neq \emptyset$ and $Q_{C,j_D} \neq \emptyset$
						which means that $j_B = j_C = j_D$. So, by induction hypothesis, there is a solution $Q_{A,j_D}$ for $p_A \vDash ac_{P_A,j_D}$
						such that $Q_{B,j_D}$ and $Q_{C,j_D}$ are restrictions of $Q_{A,j_D}$. Hence, $Q_A = (Q_{A,i})_{i \in \mathcal{I}}$ with
						$Q_{A,k} = \emptyset$ for all $k \in \mathcal{I}$ with $k \neq j_D$ is a solution for $p_A \vDash ac_{P_A}$, and we have that
						$Q_B$ and $Q_C$ are restrictions of $Q_A$.
				\end{description}
			\end{itemize}
			In the first case ($ac_{P_A} = true$), the uniqueness of the amalgamation follows from the fact that an empty solution can only be the restriction
			of another empty solution. In the second case ($ac_{P_A} = \exists(a, ac_{C_A})$), the uniqueness of $Q_A = (q_A, Q_{CA})$ 
			follows from the uniqueness of $q_A$ by universal pushout property, and by uniqueness of $Q_{CA}$ by induction hypothesis. Finally,
			in the cases of conjunction and disjunction, the uniqueness of the solution follows from uniqueness of its components by induction hypothesis.
		\item \textbf{Decomposition.} Again, we perform an induction over the structure of $ac_{P_A}$.
			\begin{itemize}
				\item $ac_{P_A} = true$. \\
					Then we also have that $ac_{P_B}$, $ac_{P_C}$ and $ac_{P_D}$ are true. Moreover, we have that $Q_A$ is empty, leading to empty
					restrictions $Q_B$, $Q_C$ and $Q_D$ that are solutions for $p_B \vDash ac_{P_B}$, $p_C \vDash ac_{P_C}$ and $p_D \vDash ac_{P_D}$,
					respectively.
				\item $ac_{P_A} = \exists(a, ac_{C_A})$ with $a: P_A \rightarrow C_A$. \\
					Then we have $ac_{P_B} = \exists(b, ac_{C_B})$, $ac_{P_C} = \exists(c, ac_{C_C})$ and $ac_{P_D} = \exists(d, ac_{C_D})$.
					By amalgamation $g_A = g_B +_{g_D} g_C$, we have pullbacks (2)-(5) below. Moreover, by restrictions $ac_{P_B}$, $ac_{P_C}$ and
					$ac_{P_D}$ of $ac_{P_A}$, we have restrictions $b$, $c$ and $d$ of $a$, implying pullbacks (6)-(9) below. 
					According to \autoref{rem:amalgamation-conditions}, we have an amalgamation of positive nested conditions
					$ac_{C_A} = ac_{C_B} +_{ac_{C_D}} ac_{C_C}$, which implies an amalgamation of typed objects $t_{CA} = t_{CB} +_{t_{CD}} t_{CC}$
					by \autoref{def:agreement-amalgamation-conditions}.
					
					\xcmatrix{@R-2.8ex@C+0.2ex}{
						P_A \ACleft{ac_{P_A}} \ar[dr]_(.3){a} \ar@/^1.5ex/[]+<1.5ex,-0.5ex>;[drdr]+<0ex,1.5ex>^(.5){p_A}
							\ar@{}[dddddddr]|{(6)} \ar@{}[rrrrrrrd]|{(7)}
						&&&&&& 
						& P_C \ACright{ac_{P_C}} \ar[lllllll]_{p_{CA}} \ar[dl]^(.3){c} \ar@/_1.5ex/[]+<-1.5ex,-0.5ex>;[dldl]+<0ex,1.5ex>_(.5){p_C}
						& \\
						& C_A \ACleft[-4.5ex,-0.5ex]{ac_{C_A}} \ar@/_2.5ex/[]+<0.5ex,-1.5ex>;[drdr]+<-2.5ex,-0.3ex>_{t_{CA}} \ar[dr]_(.7){q_A}
							\ar@{}[dddddr]|{(10)} \ar@{}[rrrrrd]|{(11)}
						&&&& 
						& C_C \ACright[4.5ex, -0.5ex]{ac_{C_C}} \ar@/^2.5ex/[]+<-0.5ex,-1.5ex>;[dldl]+<2.5ex,-0.3ex>^{t_{CC}}
							\ar[lllll]^{c_{CA}}|(.1)*+{\hole}|(.9)*+{\hole} \ar@{.>}[dl]^(.7){q_C}
						& \\
						&& G_A \ar[dr]_{g_A} \ar@{}[dddr]|{(2)} \ar@{}[rrrd]|{(3)}
						&&
						& G_C \ar[lll]^{g_{CA}} \ar[dl]^{g_C}
						&& \\
						&&& TG_A \ar@{}[dr]|{(1)}
						& TG_C \ar[l]_{tg_{CA}}
						&&& \\
						&&& TG_B \ar[u]^{tg_{BA}}
						& TG_D \ar[l]_{tg_{DB}} \ar[u]_{tg_{DC}}
						&&& \\
						&& G_B \ar[uuu]_{g_{BA}}|(.25)\hole|(.75)\hole \ar[ur]^{g_B}
						&&
						& G_D \ar[lll]^{g_{DB}} \ar[uuu]_{g_{DC}}|(.25)\hole|(.75)\hole \ar[ul]_{g_D} \ar@{}[lllu]|{(4)} \ar@{}[uuul]|{(5)}
						&& \\
						& C_B \ACleft[-4.5ex,0.4ex]{ac_{C_B}} \ar@/^2.5ex/[]+<0.5ex,1.5ex>;[urur]+<-2.5ex,0.3ex>^{t_{CB}} \ar[uuuuu]_{c_{BA}} 
							\ar@{.>}[ur]^(.7){q_B}
						&&&& 
						& C_D \ACright[4.5ex,0.4ex]{ac_{C_D}} \ar@/_2.5ex/[]+<-0.5ex,1.5ex>;[ulul]+<2.5ex,0.3ex>_{t_{CD}}
							\ar[lllll]^{c_{DB}}|(.1)*+{\hole}|(.9)*+{\hole} \ar[uuuuu]_{c_{DC}} \ar@{.>}[ul]_(.7){q_D}
							\ar@{}[lllllu]|{(12)} \ar@{}[uuuuul]|{(13)}
						& \\
						P_B \ACleft{ac_{P_B}} \ar[uuuuuuu]^{p_{BA}} \ar[ur]^(.3){b} \ar@/_1.5ex/[]+<1.5ex,0.5ex>;[urur]+<0ex,-1.5ex>_(.5){p_B}
						&&&&&& 
						& P_D \ACright{ac_{P_D}} \ar[lllllll]^{p_{DB}} \ar[uuuuuuu]_{p_{DC}} \ar[ul]_(.3){d}
							\ar@/^1.5ex/[]+<-1.5ex,0.5ex>;[ulul]+<0ex,-1.5ex>^(.5){p_D}
							\ar@{}[lllllllu]|{(8)} \ar@{}[uuuuuuul]|{(9)}
						& 
					}
					
					Now, given a solution $Q_A = (q_A, Q_{CA})$ for $p_A \vDash ac_{P_A}$, there is $q_A: C_A \to G_A \in \M$ with $q_A \circ a = p_A$.
					
					Furthermore, we have
					\[ g_A \circ q_A \circ c_{BA} = t_{CA} \circ c_{BA} = tg_{BA} \circ t_{CB}, \]
					which implies a unique morphism $q_B: C_B \to G_B$ by pullback (2) such that 
					$g_B \circ q_B = t_{CB}$ and $g_{BA} \circ q_B = q_A \circ c_{BA}$. Due to amalgamation $t_{CA} = t_{CB} +_{t_{CD}} t_{CC}$, we have that $t_{CB}$
					is a restriction  of $t_{CA}$ and thus (10)+(2) is a pullback. So, together with pullback (2), we obtain that also (10) is
					a pullback by pullback decomposition and, thus, $q_B$ is a restriction of $q_A$. 
					
					Moreover, by $q_A, tg_{BA} \in \M$ and closure of \M under pullbacks, 
					we know that
					$q_B, g_{BA} \in \M$. Hence, by 
					\[ g_{BA} \circ p_B = p_A \circ p_{BA} = q_A \circ a \circ p_{BA} = q_A \circ c_{BA} \circ b = g_{BA} \circ q_B \circ b, \]
					we obtain $p_B = q_B \circ b$ because $g_{BA} \in \M$ is a monomorphism. 
					
					Analogously, due to pullback (3) and restriction $t_{CC}$ of $t_{CA}$, there is a unique restriction $q_C: C_C \to G_C \in \M$ 
					of $q_A$ with pullback (11) such that $p_C = q_C \circ c$, 
					and due to pullback (4) and restriction $t_{CD}$ of $t_{CB}$, there is a unique 
					restriction $q_D: C_D \to G_D \in \M$ of $q_B$ with pullback (12) such that $p_D = q_D \circ d$. Then, since $t_{CD}$ is a restriction
					of $t_{CC}$, (5)+(13) is a pullback which implies that also (13) is a pullback by pullback decomposition and pullback (5). Thus, 
					$q_D$
					is also a restriction of $q_C$, which means that we have $q_A = q_B +_{q_D} q_C$.
					
					So, by induction hypothesis, there are solutions $Q_{CB}$ for $q_B \vDash ac_{C_B}$, $Q_{CC}$ for $q_C \vDash ac_{C_C}$,
					and $Q_{CD}$ for $q_D \vDash ac_{C_D}$ such that $Q_{CA} = Q_{CB} +_{Q_{CD}} Q_{CC}$. Hence, for $Q_B = (q_B, Q_{CB})$,
					$Q_C = (q_C, Q_{CC})$ and $Q_D = (q_D, Q_{CD})$ we obtain that $Q_A = Q_B +_{Q_D} Q_C$.
					
				\item $ac_{P_A} = \bigwedge_{i \in \mathcal{I}} ac_{P_A,i}$. \\
					Then we also have $ac_{P_B} = \bigwedge_{i \in \mathcal{I}} ac_{P_B,i}$, $ac_{P_C} = \bigwedge_{i \in \mathcal{I}} ac_{P_C,i}$,
					and $ac_{P_D} = \bigwedge_{i \in \mathcal{I}} ac_{P_D,i}$.
					Now, given a solution $Q_A = (Q_{A,i})_{i \in \mathcal{I}}$ for $p_A \vDash ac_{P_A}$, 
					then $Q_{A,i}$ is a solution for $p_A \vDash ac_{P_A,i}$ for all $i \in \mathcal{I}$. Thus, by induction hypothesis for all 
					$i \in \mathcal{I}$, there are solutions $Q_{B,i}$ for $p_B \vDash ac_{P_B,i}$, $Q_{C,i}$ for $p_C \vDash ac_{P_C,i}$,
					and $Q_{D,i}$ for $ac_{P_D,i}$ such that $Q_{A,i} = Q_{B,i} +_{Q_{D,i}} Q_{C,i}$. This in turn means that for all $i \in \mathcal{I}$
					there are $Q_{B,i}$ and $Q_{C,i}$ restrictions of $Q_{A,i}$, and $Q_{D,i}$ is a restriction of $Q_{B,i}$ and $Q_{C,i}$.
					Hence, for $Q_B = (Q_{B,i})_{i \in \mathcal{I}}$,
					$Q_C = (Q_{C,i})_{i \in \mathcal{I}}$ and $Q_D = (Q_{D,i})_{i \in \mathcal{I}}$ we have that $Q_B$ and $Q_C$ are restrictions of $Q_A$,
					and $Q_D$ is a restriction of $Q_B$ and $Q_C$, implying $Q_A = Q_B +_{Q_D} Q_C$.
				\item $ac_{P_A} = \bigvee_{i \in \mathcal{I}} ac_{P_A,i}$. \\
					Then we also have $ac_{P_B} = \bigvee_{i \in \mathcal{I}} ac_{P_B,i}$, $ac_{P_C} = \bigvee_{i \in \mathcal{I}} ac_{P_C,i}$,
					and $ac_{P_D} = \bigvee_{i \in \mathcal{I}} ac_{P_D,i}$.
					Given a solution $Q_A = (Q_{A,i})_{i \in \mathcal{I}}$ for $p_A \vDash ac_{P_A}$, then there is $j \in \mathcal{I}$, 
					such that $Q_{A,j}$ is a solution for $p_A \vDash ac_{P_A,j}$, and for all $k \in \mathcal{I}$ with $k \neq j$ there is $Q_{A,k} = \emptyset$.
					By induction hypothesis, there are solutions $Q_{B,j}$ for $p_B \vDash ac_{P_B,j}$, $Q_{C,j}$ for $p_C \vDash ac_{P_C,j}$, and
					$Q_{D,j}$ for $p_D \vDash ac_{P_D,j}$ such that $Q_{A,j} = Q_{B,j} +_{Q_{D,j}} Q_{C,j}$. Hence, for 
					$Q_B = (Q_{B,i})_{i \in \mathcal{I}}$,$Q_C = (Q_{C,i})_{i \in \mathcal{I}}$ and $Q_D = (Q_{D,i})_{i \in \mathcal{I}}$, 
					where for all $k \in \mathcal{I}$ with $k \neq j$ there is $Q_{B,k} = Q_{C,k} = Q_{D,k} = \emptyset$, we have that
					$Q_A = Q_B +_{Q_D} Q_C$.
			\end{itemize}
			The uniqueness of the solutions follows from uniqueness of restrictions by pullback constructions.
	\end{description}
\end{proof}
\end{appendix}


\end{document}